\newtheoremstyle{note}
  {\topsep/2}               
  {\topsep/2}               
  {}                      
  {\parindent}            
  {\itshape}              
  {.}                     
  {5pt plus 1pt minus 1pt}
  {}
\theoremstyle{note}
\newtheorem{theorem}{Theorem}
\newtheorem{lemma}{Lemma}
\newtheorem{proposition}{Proposition}
\theoremstyle{definition}
\theoremstyle{remark}
\newcommand{\tr}{\operatorname{tr}}
\newcommand{\rk}{\operatorname{rank}}
 \newcommand{\rme}{\mathrm{e}}
 \newcommand{\rmi}{\mathrm{i}}
 \newcommand{\rmC}{\mathrm{C}}
 \newcommand{\na}{\mathrm{NA}}
 \newcommand{\bbZ}{\mathbb{Z}}
 \newcommand{\supp}{\mathrm{supp}}
 \newcommand{\id}{1}
  \newcommand{\scrA}{\mathscr{A}}
 \newcommand{\scrI}{\mathscr{I}}
 \newcommand{\scrN}{\mathscr{N}}
 \newcommand{\dfe}{\mathrm{DFE}}
 \newcommand{\mth}{\mathrm{MTH}}
\newcommand{\be}{\begin{equation}}
\newcommand{\ee}{\end{equation}}
\newcommand{\ba}{\begin{align}}
\newcommand{\ea}{\end{align}}
\def\<{\langle}  
\def\>{\rangle}  
\def\eqref#1{\textup{(\ref{#1})}}  
\newcommand{\eref}[1]{Eq.~\textup{(\ref{#1})}}
\newcommand{\Eref}[1]{Equation~\textup{(\ref{#1})}}
\newcommand{\esref}[1]{Eqs.~\textup{(\ref{#1})}}
\newcommand{\fref}[1]{Fig.~\ref{#1}}
\newcommand{\tref}[1]{Table~\ref{#1}}
\newcommand{\sref}[1]{Sec.~\ref{#1}}
\newcommand{\Sref}[1]{Section~\ref{#1}}
\newcommand{\thref}[1]{Theorem~\ref{#1}}
\newcommand{\Thref}[1]{Theorem~\ref{#1}}
\newcommand{\lref}[1]{Lemma~\ref{#1}}
\newcommand{\Lref}[1]{Lemma~\ref{#1}}
\newcommand{\lsref}[1]{Lemmas~\ref{#1}}
\newcommand{\Lsref}[1]{Lemmas~\ref{#1}}
\newcommand{\pref}[1]{Proposition~\ref{#1}}
\newcommand{\cref}[1]{Conjecture~\ref{#1}}
\newcommand{\Cref}[1]{Conjecture~\ref{#1}}
\newcommand{\aref}[1]{Appendix~\ref{#1}}
\newcommand{\rcite}[1]{Ref.~\cite{#1}}
\newcommand{\rscite}[1]{Refs.~\cite{#1}}
\begin{document}
\title{Efficient Verification of  Hypergraph States}

\author{Huangjun Zhu}
\email{zhuhuangjun@fudan.edu.cn}

\affiliation{Department of Physics and Center for Field Theory and Particle Physics, Fudan University, Shanghai 200433, China}

\affiliation{State Key Laboratory of Surface Physics, Fudan University, Shanghai 200433, China}

\affiliation{Institute for Nanoelectronic Devices and Quantum Computing, Fudan University, Shanghai 200433, China}

\affiliation{Collaborative Innovation Center of Advanced Microstructures, Nanjing 210093, China}

\affiliation{Institute for Theoretical Physics, University of Cologne,  Cologne 50937, Germany}

\author{Masahito Hayashi}
\affiliation{Graduate School of Mathematics, Nagoya University, Nagoya, 464-8602, Japan}

\affiliation{Shenzhen Institute for Quantum Science and Engineering,
Southern University of Science and Technology,
Shenzhen,
518055, China}

\affiliation{Center for Quantum Computing, Peng Cheng Laboratory, Shenzhen 518000, China}

\affiliation{Centre for Quantum Technologies, National University of Singapore, 3 Science Drive 2, 117542, Singapore}

\begin{abstract}
	
Graph states and hypergraph states are of wide interest in quantum information processing and foundational studies. Efficient verification of these states  is a key to various applications. Here we propose a simple method for verifying hypergraph states which requires only two distinct Pauli measurements for each party, yet its efficiency is comparable to the best  strategy based on entangling measurements. For a given  state, the overhead is bounded by the chromatic number and degree of the underlying hypergraph. Our protocol is dramatically more efficient than all previous protocols based on local measurements, including tomography and direct fidelity estimation. It enables the verification of hypergraph states and genuine multipartite entanglement of thousands of qubits. The protocol can also be generalized to the adversarial scenario, while achieving almost the same efficiency. This merit is  particularly appealing to demonstrating  blind measurement-based quantum computation and quantum supremacy.
\end{abstract}

\date{\today}
\maketitle

\section{Introduction}
Entanglement is the characteristic feature of quantum theory and a key resource in quantum information processing \cite{HoroHHH09,GuhnT09}. As an archetypal example of quantum states with genuine multipartite entanglement (GME),
graph states  are of wide interest to  (blind) quantum computation \cite{RausB01,RausBB03,BroaFK09,MoriF13,HayaM15,FujiH17,HayaH18}, quantum error correction \cite{Gott97the,SchlW01},  quantum networks \cite{PersLCL13,MccuPBM16,MarkK18},
and foundational studies on nonlocality \cite{GreeHSZ90,ScarASA05,GuhnTHB05}.
Hypergraph states \cite{KrusK09,QuWLB13,RossHBM13,SteiRMG17,XionZCC18}, as a generalization of graph states, are equally useful in these research areas \cite{MillM16,MillM18,MoriTH17,GachBG16,GachGM19,Yosh16}. Moreover, certain hypergraph states, like Union Jack states, are universal for measurement-based quantum computation (MBQC) under only Pauli measurements \cite{MillM16,MillM18,GachGM19,TakeMH19}, which is impossible for graph states. Furthermore, hypergraph states are  attractive for demonstrating quantum supremacy \cite{BremMS16,MoriTH17} among other merits.

The applications of hypergraph states rely crucially on our ability to verify them with local measurements that are accessible in the lab.   However, no efficient method is known so far for verifying general hypergraph states, except for
graph states \cite{HayaM15,FujiH17,HayaH18, PallLM18,MarkK18,TakeMMM19}. In general, the resource required in traditional tomography increases exponentially with the number of qubits. The same is true for popular alternatives, such as compressed sensing \cite{GrosLFB10} and direct fidelity estimation (DFE) \cite{FlamL11}.   Even recent approaches tailored for hypergraph states  \cite{MoriTH17,TakeM18} are  too prohibitive to apply in  practice. The situation is much worse in the adversarial scenario, which  is crucial to many tasks in quantum information processing  that require high security conditions, including blind MBQC \cite{MoriF13,HayaM15,FujiH17,HayaH18,TakeMH19} and quantum networks \cite{PersLCL13,MccuPBM16,MarkK18}.
In this case, unfortunately,  to verify the simplest nontrivial hypergraph states (say of three qubits) already entails an astronomical number of measurements \cite{MoriTH17,TakeM18}.

Here we propose a simple and efficient method for verifying general (qubit and qudit) hypergraph states which requires only two distinct Pauli measurements for each party.
To verify an $n$-qubit hypergraph state, our protocol requires at most $m=n$ (potential) measurement settings and $ m\epsilon^{-1}\ln\delta^{-1}$ tests in total, where $\epsilon$ and $\delta$  denote the infidelity and  significance level, which characterize the target precision.
For a given state, $m$ can be replaced by the chromatic number or degree of the underlying hypergraph.   For many interesting graph states and hypergraph states, including cluster states and Union Jack states, the number of measurement settings and that of  tests in total do not increase with the number of qubits. For example,  Union Jack states can be verified with only three measurement settings and $3\epsilon^{-1}\ln\delta^{-1}$ tests in total.

Our protocol for verifying hypergraph states is dramatically more efficient than known candidates, including tomography and DFE \cite{FlamL11}, as well as recent  protocols tailored for hypergraph states \cite{MoriTH17,TakeM18}.
It enables efficient verification of hypergraph states  of thousands of  qubits
and is also highly efficient in certifying GME.
Moreover, our protocol can be generalized to the  adversarial scenario, while retaining almost the same efficiency, in which case
 the advantage over  previous approaches  is even more dramatic.  For stabilizer states (equivalent  to graph states under local Clifford transformations \cite{Schl02,GrasKR02}), our protocol adapted for the adversarial scenario is also much more efficient than previous protocols applicable to the adversarial scenario. Our proposal  is thus particularly appealing  to realizing blind MBQC and quantum supremacy.

The rest of this paper is organized as follows. In \sref{sec:QSV} we review the basic framework of pure-state verification.  In \sref{sec:HGS}, we review hypergraphs and hypergraph states  in preparation for later study. In \sref{sec:VHGS} we propose a simple and efficient protocol for verifying general  hypergraph states in  the nonadversarial scenario. In \sref{sec:VHGSadv} we generalize our protocol to the adversarial scenario and discuss an application to demonstrating quantum supremacy.
In \sref{sec:HGSGME} we provide efficient protocols for certifying GME of hypergraph states based on the above verification protocols. In \sref{sec:QuditHGS} we generalize our results to qudit hypergraph states. \Sref{sec:summary} summarizes this paper. To streamline the presentation, a few technical proofs and some additional discussions are relegated to the appendix.

\bigskip

\section{\label{sec:QSV}Verification of pure states}
Consider a device that is supposed to  produce the target state $|\Psi\>$ in the  Hilbert space ${\cal H}$.
In practice, the device may actually produce $\sigma_1, \sigma_2, \ldots, \sigma_{N}$ in $N$ runs.
Now our task is to
determine whether the average infidelity $\bar{\epsilon}=1-\bigr(\sum_j \<\Psi|\sigma_j|\Psi\>/N\bigr)$ of these states with the target state is smaller than a given threshold, say $\epsilon$.
To achieve this task we can perform two-outcome tests $\{P_l, 1-P_l\}$ based on local projective measurements. The test projector $P_l$ corresponds to passing the test and satisfies the condition $P_l|\Psi\>=|\Psi\>$, so that the target state can pass the test for sure. Suppose the test $P_l$ is performed with probability $\mu_l$, then the passing probability of a general state $\sigma$ is given by $\tr(\Omega\sigma)$, where
$\Omega:=\sum_l \mu_l P_l$ is the  verification operator and also called a strategy.
This probability satisfies the following equation \cite{PallLM18, ZhuH19AdL},
\begin{equation}\label{eq:PassingProb}
\max_{\<\Psi|\sigma|\Psi\>\leq 1-\epsilon }\tr(\Omega \sigma)=1- [1-\beta(\Omega)]\epsilon=1- \nu(\Omega)\epsilon,
\end{equation}
where $\beta(\Omega)$ is the second largest eigenvalue of $\Omega$, and  $\nu(\Omega):=1-\beta(\Omega)$ is the spectral gap.

\Eref{eq:PassingProb} implies that
the probability that all states $\sigma_1, \sigma_2, \ldots, \sigma_{N}$
pass the tests is at most
 $[1-\nu(\Omega)\bar{\epsilon}]^N$, where $\bar{\epsilon}$ is the average infidelity. To ensure the condition
$\sum_j \<\Psi|\sigma_j|\Psi\>/N> 1-\epsilon$ with significance level $\delta$, the minimum number of tests reads
\cite{ZhuH19AdS,ZhuH19AdL}
\begin{equation}\label{eq:NumTest}
\!N_\na(\epsilon,\delta,\Omega)\!=\!\biggl\lceil
\frac{1}{\ln[1-\nu(\Omega)\epsilon]}\ln\delta\biggr\rceil
\leq
\biggl\lceil
\frac{1}{\nu(\Omega)\epsilon}\ln\frac{1}{\delta}\biggr\rceil,
\end{equation}
where NA in the subscript means nonadversarial.
A similar formula was first derived in \rcite{PallLM18} under the assumption that
the fidelity $\<\Psi|\sigma_j|\Psi\>$ either equals 1  for all $j$ or satisfies $\<\Psi|\sigma_j|\Psi\>\leq 1- \epsilon$
for all $j$. Here we do not need this assumption.

The verification strategy $\Omega$ can also be applied to constructing upper and lower bounds for the average fidelity.
Suppose all  $\sigma_j$ are identical to the state $\sigma$ and   let $F=\<\Psi|\sigma|\Psi\>$ be the fidelity between $\sigma$ and the target state $|\Psi\>$. Then it is easy to derive that
\begin{equation}\label{eq:FidPassingProb1}
[1-\tau(\Omega)]F+\tau(\Omega) \leq \tr(\Omega \sigma)\leq \nu(\Omega)F+\beta(\Omega),
\end{equation}
where $\tau(\Omega)$ is the smallest eigenvalue of $\Omega$. As an implication, we have \cite{ZhuH19AdL}
\begin{equation}\label{eq:FidPassingProb2}
1-\tr(\Omega\sigma)\leq \frac{1-\tr(\Omega\sigma)}{1-\tau(\Omega)}\leq 1-F\leq\frac{1-\tr(\Omega\sigma)}{\nu(\Omega)} .
\end{equation}
In this way we can construct an interval in which the infidelity (or fidelity) lies by virtue of  the passing probability $\tr(\Omega \sigma)$. When $\sigma_j$ varies over different runs, \esref{eq:FidPassingProb1} and \eqref{eq:FidPassingProb2} are still applicable if $F$ and $\tr(\Omega \sigma)$ are replaced by their averages over all runs.

In the adversarial scenario, the quantum states are prepared by a potentially malicious adversary, as encountered in blind MBQC. Nevertheless, we can still verify pure quantum states efficiently by performing random permutations before applying a verification strategy $\Omega$ as in the nonadversarial scenario. In this case, the number of required tests is determined in \rscite{ZhuH19AdS,ZhuH19AdL}. Surprisingly, by adding the trivial test with a suitable probability, this number is comparable to the counterpart for the nonadversarial scenario, and the overhead is at most three times for high-precision verification. Therefore, to construct an efficient verification protocol for the adversarial scenario, it remains to devise
an efficient verification protocol for the nonadversarial scenario.

\section{\label{sec:HGS}Hypergraphs and hypergraph states}
\subsection{Hypergraphs}
 A hypergraph $G=(V,E)$ is characterized by  a set of vertices $V=\{1,2,\ldots,n\}$ and a set of hyperedges $E\subset \mathscr{P}(V)$, where $\mathscr{P}(V)$ is the power set of $V$ \cite{QuWLB13,RossHBM13}.
 The order of a hyperedge is the number of vertices it connects, and the order of a hypergraph is the maximum order of its hyperedges. As examples, \fref{fig:hypergraph} shows the
  order-3 hypergraphs underlying order-3 cluster states and Union Jack states \cite{MillM16}.
A graph is a special hypergraph in which all hyperedges have order 2.
Two distinct vertices of $G$ are adjacent if they are connected by a hyperedge. The degree $\deg(j)$ of a vertex $j$ is the number of vertices that are adjacent to it; the degree $\Delta(G)$ of $G$ is
the maximum vertex degree. A subset of $V$ is a clique if every  two vertices in the set are adjacent. The clique number $\varpi(G)$ of $G$ is the maximum number of vertices over all cliques. By contrast,
a subset is an independent set if no two vertices are adjacent. The independence number $\alpha(G)$ of $G$ is the  maximum number of vertices over all independent sets.

\begin{figure}
	\includegraphics[width=8.6cm]{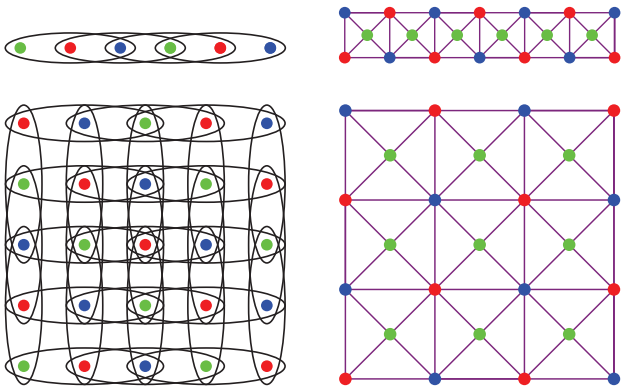}
	\caption{\label{fig:hypergraph}(color online) Examples of hypergraphs and associated hypergraph states. Left plot: 1D and 2D order-3 cluster states;
		every three neighboring vertices on a row or column are connected by an order-3 hyperedge. Right plot: Union Jack states on a chain and  a 2D lattice, respectively;	
		the three vertices of each elementary triangle are connected by an order-3 hyperedge \cite{MillM16}. All four hypergraphs are 3-colorable as illustrated.}
\end{figure}

A set  $\mathscr{A}=\{A_1, A_2, \ldots, A_m \}$ of independent sets of $G$ is an independence cover  if $\cup_{l=1}^m A_l=V$. The cover $\mathscr{A}$ defines a coloring of $G$ with $m$ colors when $\mathscr{A}$ forms a partition of $V$, that is,
when $A_l$ are pairwise disjoint (assuming no $A_l$ is empty).  A hypergraph $G$ is $m$-colorable if its vertices can be colored using $m$ different colors such that  adjacent vertices have different colors.  A bipartite graph is a 2-colorable graph.  The chromatic number $\chi(G)$  is the minimum number of colors in any coloring of $G$.

\begin{table}
	\caption{\label{tab:Graph}Degrees $\Delta(G)$, clique numbers $\varpi(G)$, independence numbers $\alpha(G)$, chromatic numbers $\chi(G)$, and independence degrees $\gamma(G)$ of common graphs and hypergraphs of $n$ vertices. A graph is complete if every two vertices are adjacent. Note that the odd cycle of three vertices is complete.
		Here we assume that each 2-colorable graph has at least one edge, while each 3-colorable hypergraph has at least one hyperedge of order 3, as illustrated in  \fref{fig:hypergraph}.  }
	\begin{tabular}{p{18ex}|ccccc}
		\hline\hline		
		Hypergraphs $G$ &  $\Delta(G)$ &  $\varpi(G)$ & $\alpha(G)$ & $\chi(G)$ & $\gamma(G)$ \\ \hline
		Square lattice  & 4 & 2 & $\lceil n/2\rceil$ &2 & 1/2\\
		Cubic lattice in dimension $k$ & $2k$ & 2 & $\lceil n/2\rceil$ &2 & 1/2\\
		Triangular lattice  & 6 & 3 & $\geq n/3$ &3 & 1/3\\
		
		Even cycle  & 2 & 2 & $ n/2$ &2 & 1/2\\
		Odd cycle($n\geq5$)  & 2 & 2 & $(n-1)/2$ &3 & $(n-1)/(2n)$\\
		Complete graph  & $n-1$ & $n$ & 1 &$n$ & $1/n$\\
		2-colorable graph  & - & 2 & $\geq n/2$ &2 & $1/2$\\
		3-colorable \hspace{3ex} hypergraph  & - & 3 & $\geq n/3$ &3 & $1/3$\\
		\hline\hline		
	\end{tabular}
\end{table}

 A weighted independence cover $(\mathscr{A},\mu)$ of $G$  is a cover together with weights $\mu_l$ for  $A_l\in \mathscr{A}$, where  $\mu_l$ form a probability distribution.
It  is closely connected to a fractional coloring \cite{HiltRS73,ScheU97} as explained in \aref{sec:HypergraphApp}.
 The cover strength of $(\mathscr{A},\mu)$  is defined as
\begin{equation}
s(\mathscr{A},\mu)=\min_{j\in V}\sum_{l|A_l \ni j} \mu_l.
\end{equation}
The independence degree $\gamma(G)$ of $G$ is the
maximum of $s(\mathscr{A},\mu)$ over all weighted independence covers of $G$, namely,
\begin{equation}
\gamma(G)=\max_{(\mathscr{A},\mu)} s(\mathscr{A},\mu).
\end{equation}
It  is also equal to the inverse of the fractional chromatic number $\chi_f(G)$, that is, $\gamma(G)=1/\chi_f(G)$ \cite{HiltRS73,ScheU97}.  The degrees, clique numbers, independence numbers, chromatic numbers, and independence degrees of common graphs and hypergraphs are shown in \tref{tab:Graph}.
The following  well-known proposition clarifies the relations among these hypergraph invariants; see  \aref{sec:HypergraphApp} for a self-contained proof and additional discussions.
\begin{proposition}\label{pro:IndependenceDegree}
Any hypergraph $G=(V,E)$ satisfies
\begin{equation}
\frac{1}{\Delta(G)+1}\leq \frac{1}{\chi(G)}\leq \gamma(G)\leq \min\left\{\frac{\alpha(G)}{|V|},\frac{1}{\varpi(G)}\right\}.
\end{equation}
\end{proposition}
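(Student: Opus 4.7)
The plan is to prove the four inequalities in order, each by a short combinatorial argument. All of them follow standard fractional-coloring ideas, but the statement here is for hypergraphs with the adjacency-based notions of degree, clique, and independence, so I would be a bit careful to ensure the arguments do not rely on the graph-only assumption that hyperedges have order two.

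First, for $1/(\Delta(G)+1)\le 1/\chi(G)$, I would prove the equivalent bound $\chi(G)\le\Delta(G)+1$ by a greedy-coloring argument. Order the vertices arbitrarily as $v_1,\dots,v_n$ and color them one by one, always assigning to $v_j$ the smallest color not appearing among its already-colored neighbors. Because $v_j$ has at most $\Delta(G)$ neighbors (adjacency is defined via shared hyperedges, but the definition of degree counts vertices, not hyperedges), at least one color in $\{1,\dots,\Delta(G)+1\}$ is always available, and the resulting coloring is proper.

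Second, for $1/\chi(G)\le \gamma(G)$, I would take any proper coloring of $G$ with $m=\chi(G)$ colors, which yields a partition $\mathscr{A}=\{A_1,\dots,A_m\}$ of $V$ into independent sets, and assign uniform weights $\mu_l=1/m$. Since each vertex $j$ belongs to exactly one $A_l$, the cover strength is $s(\mathscr{A},\mu)=1/m=1/\chi(G)$, so $\gamma(G)\ge 1/\chi(G)$ by the definition of $\gamma(G)$ as a maximum over weighted independence covers.

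Third, for the two upper bounds on $\gamma(G)$, I would use a double-counting identity. For an arbitrary weighted independence cover $(\mathscr{A},\mu)$, the key observation is
\begin{equation}
\sum_{j\in S}\sum_{l\,|\,A_l\ni j}\mu_l\;=\;\sum_l \mu_l\,|A_l\cap S|
\end{equation}
for any $S\subset V$. Taking $S=V$ and using $|A_l|\le \alpha(G)$ bounds the left side by $\alpha(G)$; on the other hand the left side is at least $|V|\,s(\mathscr{A},\mu)$, which yields $s(\mathscr{A},\mu)\le \alpha(G)/|V|$. Taking instead $S=C$ for a maximum clique $C$ of size $\varpi(G)$, and noting $|A_l\cap C|\le 1$ because an independent set can share at most one vertex with a clique (this uses only that any two vertices in $C$ are adjacent), the left side is at most $\sum_l\mu_l=1$ and at least $\varpi(G)\,s(\mathscr{A},\mu)$, giving $s(\mathscr{A},\mu)\le 1/\varpi(G)$. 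Maximizing over $(\mathscr{A},\mu)$ yields the two claimed upper bounds.

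I do not expect any real obstacle; the only point requiring a little care is the hypergraph version of Brooks-type greedy coloring, namely checking that \emph{adjacency} (defined via shared hyperedges) rather than incidence with hyperedges is the correct notion so that $\Delta(G)+1$ colors suffice. Once that is observed, the four inequalities chain together cleanly, and the connection to the fractional chromatic number $\chi_f(G)=1/\gamma(G)$ (to be elaborated in the appendix) is a direct consequence of the second step.
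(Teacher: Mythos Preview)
Your proposal is correct and follows essentially the same route as the paper: greedy coloring for $\chi(G)\le\Delta(G)+1$, the uniform-weight optimal coloring for $1/\chi(G)\le\gamma(G)$, and the same double-counting/averaging arguments (with $S=V$ and $S$ a maximum clique) for the two upper bounds. The only cosmetic difference is that the paper orders vertices by decreasing degree in the greedy step (a simplified Dsatur), whereas you use an arbitrary order; both suffice for the $\Delta(G)+1$ bound.
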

As an implication of  \pref{pro:IndependenceDegree}, $\gamma(G)\geq 1/n$ for any hypergraph of $n$ vertices since $\Delta(G)\leq n-1$ and $\chi(G)\leq n$.
In addition, $\gamma(G)=1/m$ if
the  hypergraph $G$ has chromatic number and clique number both equal to $m$. In particular,  $\gamma(G)$ can attain the maximum 1 iff $G$ has no nontrivial hyperedges. Here a  hyperedge is nontrivial if its order	is larger than or equal to 2.
Any 2-colorable graph $G$ with at least one nontrivial edge has $\gamma(G)=1/2$.
For example $\gamma(G)=1/2$ when $G$ is a
square lattice  (or analogs in higher dimensions) or  an even cycle; $\gamma(G)=1/3$
when   $G$ is a triangular lattice.

\subsection{Hypergraph states}
The Pauli group for a qubit is generated by the two Pauli matrices
\begin{equation}
 X:=\biggl(\begin{matrix}
0 &1 \\
1&0
\end{matrix}\biggr),\quad  Z:=\biggl(\begin{matrix}
1 &0 \\
0&-1
\end{matrix}\biggr).
\end{equation}
The Pauli matrices for the $j$th qubit are
indexed by the subscript $j$.
Given any hypergraph $G=(V,E)$ with $n$ vertices, we can construct an $n$-qubit hypergraph state $|G\>$: prepare the state $|+\>=(|0\>+|1\>)/\sqrt{2}$ (eigenstate of $X$ with eigenvalue 1) for each vertex of $G$ and apply the generalized controlled-$Z$ operation $CZ_e$ on the vertices of each hyperedge $e\in E$ \cite{QuWLB13,RossHBM13},
that is,
\begin{equation}
|G\>=\Biggl(\prod_{e\in E} CZ_e\Biggr)|+\>^{\otimes n}.
\end{equation}
Here $CZ_e =\bigotimes_{j\in e} \id_j-2\bigotimes_{j\in e} |1\>\<1|_{j}$,
which acts trivially on the Hilbert space associated with vertices in $V\setminus e$.
When the  hyperedge $e$ is empty, $CZ_e$ is equal to the minus identity  $-1$ by convention.
When $e$ contains a single vertex, $CZ_e$ reduces to the Pauli operator $Z$ on the vertex, which is local.
When $e$ contains two vertices, $CZ_e$ is the familiar controlled-$Z$ operation.

Alternatively, the hypergraph state $|G\>$
 is the unique eigenstate (up to a global phase factor) with eigenvalue 1  of  the following $n$ commuting (nonlocal) stabilizer operators \cite{QuWLB13,RossHBM13},
\begin{equation}\label{eq:StabilizerHGS}
K_j =X_j\otimes \prod_{e\in E|\, e\ni j	} CZ_{e\setminus \{j\}}, \quad j=1,2,\ldots, n.
\end{equation}
This alternative characterization plays a key role in the verification of hypergraph states.

The order of a hypergraph state is defined as the order of the underlying hypergraph; similar convention  applies to many other graph theoretic quantities, such as the degree, clique number, (fractional) chromatic number, independence number, and independence degree.
For example $\gamma(G)=1/2$ for  graph states $|G\>$ associated with  nontrivial 2-colorable graphs (with at least one edge), including
cluster states (of any dimension); $\gamma(G)=1/3$ for  hypergraph states $|G\>$ associated with nontrivial
3-colorable hypergraphs (with at least one hyperedge of order 3): including
order-3 cluster states (of any dimension) and  Union Jack states; cf.~\tref{tab:Graph}.

Hypergraph states enjoy a number of appealing merits.
For example,  hypergraph states of connected hypergraphs are genuinely multipartite entangled (GME)  \cite{QuWLB13}. Certain hypergraph states, including Union Jack states  shown in \fref{fig:hypergraph}, are universal for MBQC under only Pauli measurements \cite{MillM16,MillM18,GachGM19}, which is impossible for graph states. What is more appealing, hypergraph states are found recently that are universal for MBQC under only $X$ and $Z$  measurements \cite{TakeMH19}. In addition, certain hypergraph states possess symmetry-protected topological orders,  which are a focus of  ongoing research \cite{MillM16,MillM18,Yosh16}. Furthermore, hypergraph states are attractive for demonstrating quantum supremacy \cite{BremMS16,MoriTH17}.
When $G$ is an ordinary graph, $|G\>$ reduces to a graph state. All stabilizer states  are equivalent to graph states under local Clifford transformations (LC) \cite{Schl02,GrasKR02}. Meanwhile,
Calderbank-Shor-Steane  states
are equivalent to  graph states associated with  2-colorable graphs, and vice versa \cite{ChenL07}.

\section{\label{sec:VHGS}Efficient verification of hypergraph states}

\subsection{Construction of tests for hypergraph states}  Let $G=(V,E)$ be a hypergraph with $n$ vertices and $|G\>$ the associated hypergraph state.
Given any nonempty independent set $A$  of $G$,
we can devise a test for $|G\>$ based on two types of Pauli measurements. The test consists in measuring  $X_j$ for  $j\in A$ and measuring $Z_k$ for $k\in \overline{A}$, where $\overline{A}:=V\setminus A$ is the complement of $A$ in $V$.   The
measurement outcome on the $a$th qubit for $a=1,2,\ldots, n$ can be written as  $(-1)^{o_a}$, where the Boolean variable $o_a$ is either 0 or 1.
Since $A$ is an independent set,
 $X_j$ and $Z_k$ commute with $K_i$ for all $i,j\in A$ and $k\in \overline{A}$. The joint eigenstate of $X_j$ and $Z_k$ corresponding to the outcome $\{o_a\}$
is  an eigenstate of $K_i$ with eigenvalue  $(-1)^{t_i}$, where
\begin{equation}\label{eq:EigenvalueStab}
t_i=o_i+ \sum_{e\in E| e \ni i} \Biggl(\prod_{k\in e, k\neq i}o_k\Biggr).
\end{equation}
Here it is understood that $\prod_{k\in e, k\neq i}o_k=1$ if $e=\{i\}$.

Now we set the criterion that
the test is passed iff $(-1)^{t_i}=1$ for all $i\in A$, then the test effectively measures all the stabilizer operators $K_i$ for $i\in A$.
The projector onto the pass eigenspace reads
\begin{equation}\label{eq:PassProj}
P_A=\prod_{i\in A}\frac{1+K_i}{2}.
\end{equation}
A quantum state $\rho$ can always pass the test iff it is stabilized by
$K_i$  for all $i\in A$, which holds for the target state $|G\>$. The rank of the test projector $P_A$  reads
\begin{equation}
\rk(P_A)=\tr(P_A)=2^{n-|A|},
\end{equation}
where $|A|$ denotes the cardinality of the set $A$; the larger is $|A|$, the smaller is  $\rk(P_A)$. In view of this observation, it is beneficial to choose large independent sets for constructing test projectors for the hypergraph state $|G\>$. Incidentally, the cardinality $|A|$ is upper bounded by the independence number $\alpha(G)$.   Suppose $G$ has at least one nontrivial hyperedge or edge; then $\alpha(G)\leq n-1$, which implies that $\rk(P_A)\geq 2$. So at least two distinct tests are necessary to verify $|G\>$ as expected.

Denote by $\scrN(A)$ the neighborhood of $A$ in the graph $G$, that is, the set of vertices in $G$ that are adjacent to at least one vertex in $A$.
Since  $A$ is an independent set,  $\scrN(A)$  and $A$ are disjoint, that is, $\scrN(A)\subset \overline{A}$.
The complement $\overline{A}$ used in constructing the  test $P_A$ above  can  be replaced by $\scrN(A)$ because \eref{eq:EigenvalueStab} only involves  measurement outcomes associated with  vertices in the set $A\cup \scrN(A)$. In other words,  $Z$ measurements associated with vertices in $\overline{A}\setminus\scrN(A)$ are redundant.  Meanwhile, the independent set $A$ can be enlarged when $A\cup\scrN(A)$ is a proper subset of the vertex set $V$ of $G$. Conversely, if $A\cup\scrN(A)=V$, then $\scrN(A)=\overline{A}$, so that $A$ cannot be contained in any larger independent set.

 \begin{figure*}
	\includegraphics[width=14cm]{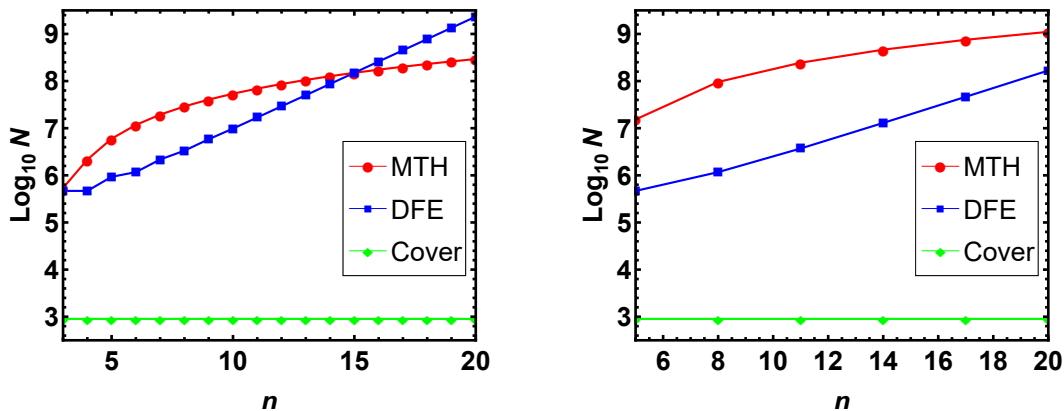}
	\caption{\label{fig:NumberMeas}(color online) Resource costs for verifying hypergraph states  in the nonadversarial scenario. Left plot: 1D order-3 cluster states; right plot: Union Jack states on a chain. Here $n$ is the number of qubits, and $N$ is the (expected) number of tests  required to verify the state within infidelity $\epsilon=0.01$ and significance level $\delta=0.05$.  In the case of the MTH protocol proposed in \rcite{MoriTH17}, only a lower bound for $N$ is given.  The lines are guides for the eye.
		Our cover protocol dramatically outperforms direct fidelity estimation (DFE) \cite{FlamL11}  and the MTH protocol (cf.~\aref{sec:Comparison}).
	}
\end{figure*}

\subsection{\label{sec:cover}The cover protocol}  Let $\mathscr{A}=\{A_1, A_2, \ldots, A_m \}$  be an independence cover  of $G$ that is composed of $m$ nonempty independent sets, then  we can devise  a verification protocol for $|G\>$ with $m$ distinct tests. For each independent set $A_l$, we can construct a test  with the test projector $P_l=\prod_{i\in A_l}\frac{1+K_i}{2}$ according to \eref{eq:PassProj}.
A state can pass all $m$ tests iff it is stabilized by  $K_i$  for all $i\in \cup_{l=1}^m A_l=V$. So only the target  state $|G\>$ can pass all tests with certainty as desired. This verification protocol is referred to as the \emph{cover protocol} (or fractional coloring protocol) since it is determined by an independence cover (or a fractional coloring). When $G$ is connected, the hypergraph state $|G\>$ is GME, so each party requires at least two distinct projective measurements to verify $|G\>$ \cite{ZhuH19AdL}. The cover protocol requires only two Pauli measurements for each party and is thus the most economical with regard to the number of measurement settings for each party.

Suppose the test $P_l$ (associated with the  independent set $A_l$) is applied with probability $\mu_l$. Then  the cover protocol is characterized by the weighted independence cover $(\mathscr{A},\mu)$.
Its efficiency  is determined by the spectral gap of the verification operator 
\begin{align}
\Omega(\mathscr{A},\mu)=\sum_{l=1}\mu_l P_l=\sum_l \mu_l \prod_{i\in A_l}\frac{1+K_i}{2}. 
\end{align}
Note that the  common eigenbasis of $K_i$ for $i\in V$ also forms an eigenbasis of $\Omega(\mathscr{A},\mu)$. Each eigenstate $|\Psi_x\>$ in this basis is specified by  an $n$ bit string  $x\in \{0,1\}^n$ and  satisfies the equation $K_i|\Psi_x\>=(-1)^{x_i}|\Psi_x\>$.  The corresponding eigenvalue of $\Omega(\mathscr{A},\mu)$ reads
\begin{equation}
\lambda_x= \sum_{l| \supp(x)\subset \overline{A}_l} \mu_l,
\end{equation}
where $\supp(x):=\{i\,|\, x_i\neq 0\}$.
To attain the second largest eigenvalue of $\Omega(\mathscr{A},\mu)$, it suffices to consider the case in which  $x$ has only one bit equal to 1,  which means
\begin{equation}
\beta(\Omega(\mathscr{A},\mu))=\max_{i\in V}\sum_{l| \overline{A}_l\ni i } \mu_l.
\end{equation}
Similarly, the smallest eigenvalue of $\Omega(\mathscr{A},\mu)$ is attained when all bits of $x$ are equal to 1, in which case we have $\lambda_x=0$, given that all independent sets $A_l$ are nonempty. So the verification operator $\Omega(\mathscr{A},\mu)$ is always singular.
These observations confirm the following theorem.
\begin{theorem}\label{thm:CoverEfficiency}The spectral gap and the smallest eigenvalue of the cover protocol read
\begin{align}
&\nu(\Omega(\mathscr{A},\mu)) =s(\mathscr{A},\mu),\quad \tau(\Omega(\mathscr{A},\mu)) =0,\\
&\max_{(\mathscr{A},\mu)}\nu(\Omega(\mathscr{A},\mu))=\gamma(G)=[\chi_f(G)]^{-1}.
\end{align}
\end{theorem}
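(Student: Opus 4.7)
The plan is that most of the work has already been done in the preceding paragraphs, where the authors diagonalize $\Omega(\mathscr{A},\mu)$ in the common eigenbasis $\{|\Psi_x\>\}_{x\in\{0,1\}^n}$ of the stabilizers $K_i$ and establish
\begin{equation}
\lambda_x=\sum_{l|\,\supp(x)\subset\overline{A}_l}\mu_l,\qquad \beta(\Omega(\mathscr{A},\mu))=\max_{i\in V}\sum_{l|\,\overline{A}_l\ni i}\mu_l.
\end{equation}
What remains is simple bookkeeping to match these with the quantities $s(\mathscr{A},\mu)$, $\gamma(G)$ and $\chi_f(G)$.

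First, I would compute $\nu=1-\beta$. Since the weights form a probability distribution and each vertex $i$ lies in exactly one of $A_l$ or $\overline{A}_l$, one has $\sum_{l|\overline{A}_l\ni i}\mu_l=1-\sum_{l|A_l\ni i}\mu_l$. Exchanging the $\max_{i\in V}$ for a $\min_{i\in V}$ then gives
\begin{equation}
\nu(\Omega(\mathscr{A},\mu))=1-\max_{i\in V}\sum_{l|\overline{A}_l\ni i}\mu_l=\min_{i\in V}\sum_{l|A_l\ni i}\mu_l=s(\mathscr{A},\mu),
\end{equation}
which is exactly the cover strength.

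Next, I would verify $\tau(\Omega(\mathscr{A},\mu))=0$. All eigenvalues $\lambda_x$ are non-negative because $\Omega(\mathscr{A},\mu)$ is a convex combination of orthogonal projectors, so it suffices to exhibit an $x$ with $\lambda_x=0$. Take $x$ with $\supp(x)=V$: since every $A_l$ in the cover is nonempty, one has $A_l\cap\supp(x)=A_l\neq\emptyset$, so $\supp(x)\not\subset\overline{A}_l$ for every $l$, and the defining sum for $\lambda_x$ is empty.

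Finally, for the second equation, taking the supremum of $\nu(\Omega(\mathscr{A},\mu))=s(\mathscr{A},\mu)$ over all weighted independence covers is the definition of $\gamma(G)$ given earlier in the paper; and $\gamma(G)=1/\chi_f(G)$ is a standard result on fractional chromatic numbers already cited via \rscite{HiltRS73,ScheU97} and restated in the discussion preceding \pref{pro:IndependenceDegree}. No step is genuinely hard here; the only mild subtlety is remembering that $\Omega$ is a convex combination of \emph{commuting} projectors so that its spectrum is literally read off from the eigenvalues $\lambda_x$, and that non-negativity of $\lambda_x$ lets one identify the smallest eigenvalue by producing a single witness.
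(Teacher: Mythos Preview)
Your proposal is correct and follows essentially the same approach as the paper: the diagonalization of $\Omega(\mathscr{A},\mu)$ in the common stabilizer eigenbasis, the identification of $\beta$ via strings of weight one, and of $\tau=0$ via the all-ones string are precisely the ``observations'' the paper records just before stating the theorem. You make explicit the small bookkeeping step $\nu=1-\beta=s(\mathscr{A},\mu)$ that the paper leaves to the reader, but otherwise the arguments coincide.
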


When the independent sets $A_1, A_2, \ldots, A_m$ are pairwise disjoint, $\mathscr{A}$ defines a coloring of $G$, in which case the  protocol $(\mathscr{A},\mu)$  is also called
a \emph{coloring protocol}. Each test of the coloring protocol is associated with a color: $X$ measurement is performed on all qubits associated with a given color, while $Z$ measurement is performed on other  qubits.  The number of distinct tests equals the number of colors. For example, the Union Jack state shown in \fref{fig:hypergraph} can be verified using a coloring protocol composed of three distinct tests.
The spectral gap of the resulting verification operator $\Omega(\mathscr{A},\mu)$ reads
\begin{equation}\label{eq:vcoloring}
\nu(\Omega(\mathscr{A},\mu))=\min_{l} \mu_l\leq |\scrA|^{-1}\leq \chi(G)^{-1}.
\end{equation}
Here the first inequality is saturated iff all weights $\mu_l$ are equal; the second one is saturated iff $|\scrA|=\chi(G)$, so that the coloring $\scrA$ is optimal (cf.~\aref{sec:HypergraphApp}). In view of this observation, by a coloring protocol, we shall assume that  all weights $\mu_l$ are equal, that is, all distinct  tests are performed with the same probability. Then the coloring protocol $(\mathscr{A},\mu)$ is also denoted by $\mathscr{A}$.  Here we emphasize that it is not necessary to compute $\gamma(G)$ or $\chi(G)$ to apply a cover or coloring protocol. In practice, a good coloring can be found  using the Dsatur algorithm \cite{Brel79} for example.

\Thref{thm:CoverEfficiency} reveals  operational meanings of the cover strength, independence degree, and fractional chromatic number in  verifying a hypergraph state. Given a cover protocol $(\mathscr{A},\mu)$, to verify  $|G\>$ within infidelity $\epsilon$ and significance level $\delta$, the minimum number of tests  is
\begin{align}\label{eq:NumTestCover}
N=\biggl\lceil
\frac{\ln\delta}{\ln[1-s(\mathscr{A},\mu)\epsilon]}\biggr\rceil
\leq
\biggl\lceil
\frac{\ln\delta^{-1}}{s(\mathscr{A},\mu)\epsilon}\biggr\rceil
\end{align}
by \eref{eq:NumTest}.
This number is minimized when the cover $(\mathscr{A},\mu)$ is optimal, so that $s(\mathscr{A},\mu)=\gamma(G)=1/\chi_f(G)$ and
\begin{align}\label{eq:NumTestCoverOpt}
N=\biggl\lceil
\frac{\ln\delta}{\ln[1-\gamma(G)\epsilon]}\biggr\rceil
\leq
\biggl\lceil
\frac{\ln\delta^{-1}}{\gamma(G)\epsilon}\biggr\rceil= \biggl\lceil
\frac{\chi_f(G)\ln\delta^{-1}}{\epsilon}\biggr\rceil
.
\end{align}
According to \pref{pro:IndependenceDegree}, we have
\begin{equation}\label{eq:NumTestCoverUB}
\!\!N\leq \biggl\lceil\frac{\chi(G)}{\epsilon}\ln \frac{1}{\delta} \biggr\rceil\leq \biggl\lceil\frac{\Delta(G)+1}{\epsilon}\ln \frac{1}{\delta}\biggr\rceil\leq \biggl\lceil\frac{n}{\epsilon}\ln \frac{1}{\delta}\biggr\rceil.
\end{equation}
Here the first  upper bound  can be achieved by an optimal coloring protocol; the second one can be achieved by a coloring with $\Delta(G)+1$ colors.

Although,  in general, it is not easy to find an optimal coloring of the hypergraph $G$, it is easy
to find a coloring  with $m\leq \Delta(G)+1$ colors by virtue of a simple greedy algorithm  as presented in the proof of \pref{pro:IndependenceDegree}; cf.~\rcite{Brel79}. If we apply a coloring protocol with $m$ colors, then the spectral gap of the verification operator reads $\nu(\Omega)=1/m$, so 
the minimum number of tests  is given by 
\begin{align}
N=\biggl\lceil
\frac{\ln\delta}{\ln(1-m^{-1}\epsilon)}\biggr\rceil
\leq
\biggl\lceil
\frac{m\ln\delta^{-1}}{\epsilon}\biggr\rceil.
\end{align}

The above analysis shows that any hypergraph state $|G\>$ can be verified with at most $m=\Delta(G)+1$ measurement settings  in which each party performs either $X$ or $Z$ measurement. Note that $m$ is upper bounded by the number $n$ of qubits.
The total number of tests is only  $\lceil m\epsilon^{-1}\ln\delta^{-1}\rceil$ and
is at most $m$ times as large as the number for the best  protocol based on  entangling measurements.   The cover protocol for verifying hypergraph states is dramatically more efficient than previous protocols \cite{FlamL11,MoriTH17}, as illustrated in  \fref{fig:NumberMeas} and discussed in detail in  \aref{sec:Comparison}.
Consider the protocol of \rcite{MoriTH17} for example, both the number of measurement settings and the total number of tests increase exponentially with $\Delta(G)$; in addition, the number of tests scales as $1/\epsilon^2$ instead of $1/\epsilon$.

For many interesting  hypergraph states, the chromatic numbers do not grow with the qubit number. So  the number of measurement settings and the total number of tests required by  the coloring protocol are independent of the number of qubits, which is the same as the best protocol based on entangling measurements.
For example,  two measurement settings and $\lceil 2\epsilon^{-1}\ln \delta^{-1} \rceil$ tests are sufficient for graph states of 2-colorable graphs (equivalent to  Calderbank-Shor-Steane states \cite{ChenL07}), including GHZ states, cluster states (of arbitrary dimensions),  tree graph states, and graph states associated with even cycles.  Three settings and $\lceil 3\epsilon^{-1}\ln \delta^{-1} \rceil$ tests are sufficient for order-3 cluster states and Union Jack states (cf.~\fref{fig:hypergraph}).

Incidentally, the cover or coloring protocol can also be applied to constructing upper and lower bounds for the average infidelity between the states prepared and the target hypergraph state $|G\>$. Suppose all states $\sigma_j$ prepared in different runs are identical to $\sigma$.
If we apply the optimal cover protocol, then the spectral gap of the verification operator reads $\nu(\Omega)=\gamma(G)=1/\chi_f(G)$. According to \eref{eq:FidPassingProb2},  the infidelity between $\sigma$ and $|G\>$ satisfies
\begin{equation}
1-\tr(\Omega\sigma)\leq 1-\<G|\sigma|G\>\leq\chi_f(G)[1-\tr(\Omega\sigma)].
\end{equation}
If instead a coloring protocol with $m$ colors is applied, then $\nu(\Omega)=1/m$, so that
\begin{equation}
1-\tr(\Omega\sigma)\leq 1-\<G|\sigma|G\>\leq m[1-\tr(\Omega\sigma)].
\end{equation}
In general, if $\sigma_j$ are not identical with each other, the above conclusions still hold if
$\<G|\sigma|G\>$ and $\tr(\Omega\sigma)$ are replaced by their averages over all runs.

\section{\label{sec:VHGSadv}Verification of hypergraph states in the adversarial scenario}
\subsection{Cover protocol for the adversarial scenario}
Thanks to a general recipe proposed in \rscite{ZhuH19AdS,ZhuH19AdL}, the cover protocol  can also be applied  to the adversarial scenario,  which is very important to many tasks in quantum information processing that demand high-security requirements, such as  blind MBQC.
Let $\Omega=\Omega(\scrA,\mu)$ be the verification operator associated with the cover protocol $(\scrA,\mu)$, then  $\nu(\Omega)=s(\mathscr{A},\mu)$ and  $\tau(\Omega)=0$ according to \thref{thm:CoverEfficiency}.  By
\rcite{ZhuH19AdL}, the number of tests required by  $\Omega$ to verify $|G\>$ within infidelity $\epsilon$ and significance level $\delta$ in  the adversarial scenario satisfies
\begin{align}\label{eq:NumTestAdvHG}
\min\left\{
\biggl\lceil\frac{1- \delta}{\nu(\Omega) \delta \epsilon}\biggr\rceil,\; \biggl\lceil\frac{1}{\delta\epsilon}-1\biggr\rceil\right\}\leq N\leq \biggl\lceil\frac{1-\delta}{\nu(\Omega)\delta\epsilon} \biggr\rceil.
\end{align}
For the optimal coloring protocol with $\nu(\Omega)=1/\chi(G)$, we have
\begin{align}\label{eq:NumTestAdvHG2}
N\leq \biggl\lceil\frac{\chi(G)(1-\delta)}{\delta\epsilon} \biggr\rceil\leq \biggl\lceil\frac{\Delta(G)+1}{\delta\epsilon} \biggr\rceil\leq \biggl\lceil\frac{n}{\delta\epsilon} \biggr\rceil.
\end{align}
If in addition $G$ is  2-colorable, then $\nu(\Omega)=1/2$ and the lower bound in  \eref{eq:NumTestAdvHG} is saturated according to  \rcite{ZhuH19AdL}. The cover protocol adapted to the adversarial scenario is much more efficient than all previous protocols for verifying hypergraph states known in the literature. Nevertheless, the scaling of $N$ with $\delta$ is suboptimal because the verification operator $\Omega(\mathscr{A},\mu)$ is singular.

 \begin{figure}
	\includegraphics[width=7.5cm]{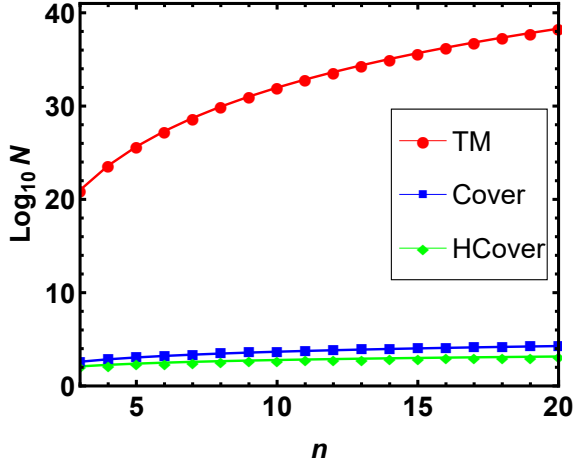}
	\caption{\label{fig:NumberAdv}(color online)  Resource costs for verifying 3-colorable hypergraph states  in the adversarial scenario. Here $n$ is the number of qubits, and $N$ is the number of tests required to verify the state within infidelity $\epsilon=1/(4n)$ and significance level $\delta=1/(4n)$. The lines are guides for the eye. Our cover protocol (Cover) and hedged cover protocol (HCover) outperform the TM protocol proposed in \rcite{TakeM18} by at least 18 orders of magnitude.
	}
\end{figure}

\subsection{\label{sec:HedgedCover}Hedged  cover protocol}
To improve the scaling behavior of the number of required tests with $\delta$, here we propose a  \emph{hedged  cover protocol} $(\mathscr{A},\mu)_p$, which is characterized by the  verification operator
\begin{equation}
\Omega_{p}=(1-p)\Omega +p
\end{equation}
 with $\Omega=\Omega(\mathscr{A},\mu)$. It can be realized by performing $\Omega$ with probability $1-p$ and the trivial test (the test projector is the identity) with probability $p$. The name "hedged  cover protocol" reflects the fact that the trivial test is introduced to hedge the influence of small eigenvalues of the verification operator $\Omega$ of the cover protocol.
 The second largest and smallest eigenvalues of $\Omega_p$ read
 \begin{equation}
 \beta_p=(1-p)\beta+p=1-\nu+p\nu,\quad \tau_p=p,
 \end{equation}
where $\beta$ is the second largest eigenvalue of $\Omega$.
The protocol $(\mathscr{A},\mu)_p$ is also called a hedged coloring protocol and denoted by  $\mathscr{A}_p$ when $\mathscr{A}$ denotes a coloring and all $\mu_l$ are equal.

According to \rscite{ZhuH19AdS,ZhuH19AdL}, to verify $|G\>$ within infidelity $\epsilon$ and significance level $\delta$ in the adversarial scenario, the number of tests required by the strategy $\Omega_p$ (with $p>0$) satisfies
\begin{equation}\label{eq:NumTestAdvRev}
N< \frac{h(p,\nu) \ln(F\delta)^{-1}}{\epsilon},
\end{equation}
where $\nu=s(\mathscr{A},\mu)$, $F=1-\epsilon$, and
\begin{align}
h(p,\nu)&=\bigl[\min\bigl\{\beta_p \ln \beta_p^{-1}, p \ln p^{-1}\bigr\}\bigr]^{-1}.
\end{align}	
To achieve high efficiency, the value of $p$ can be chosen as follows according to \rscite{ZhuH19AdS,ZhuH19AdL},
\begin{equation}\label{eq:pstardef}
p_*(\nu)=\min\bigl\{p> 0| p \ln p^{-1}\geq \beta_p \ln \beta_p^{-1} \bigr\}.
\end{equation}
With this choice, the hedged cover protocol  $(\mathscr{A},\mu)_p$ is also denoted by $(\mathscr{A},\mu)_*$; similarly, the hedged coloring protocol  $\mathscr{A}_p$ is  denoted by $\mathscr{A}_*$.
By virtue of the hedged cover protocol $(\mathscr{A},\mu)_*$, that is, $(\mathscr{A},\mu)_p$ with  $p=p_*(\nu)$, the number of tests in \eref{eq:NumTestAdvHG} can be reduced to \cite{ZhuH19AdS,ZhuH19AdL}
\begin{align}
N&=\biggl\lfloor \frac{h_*(\nu) \ln(F\delta)^{-1}}{\epsilon}\biggr\rfloor\leq  \frac{\ln(F\delta)^{-1}}{(1-\nu+\rme^{-1}\nu^2)\nu\epsilon}\nonumber\\
&\leq \frac{(1+\rme \nu-\nu)\ln(F\delta)^{-1}}{\nu\epsilon}\leq \frac{\rme\ln(F\delta)^{-1}}{\nu\epsilon}, \label{eq:NumTestAdvHGhc}
\end{align}
where $\nu=s(\mathscr{A},\mu)$, $F=1-\epsilon$,
\begin{equation}\label{eq:hstardef}
h_*(\nu)=h(p_*(\nu),\nu)=-[p_*(\nu)\ln p_*(\nu)]^{-1},
\end{equation}
and $\rme$ is the base of the natural logarithm.

Alternatively, we can choose
 $p=\nu/\rme$ to construct an efficient hedged cover protocol  \cite{ZhuH19AdS,ZhuH19AdL}.
It turns out that the three upper bounds in \eref{eq:NumTestAdvHGhc} still apply if the hedged cover protocol $(\mathscr{A},\mu)_*$ is replaced by $(\mathscr{A},\mu)_p$ with
$p=\nu/\rme$, that is,
\begin{align}
N&\leq  \frac{\ln(F\delta)^{-1}}{(1-\nu+\rme^{-1}\nu^2)\nu\epsilon}\leq \frac{(1+\rme \nu-\nu)\ln(F\delta)^{-1}}{\nu\epsilon}\nonumber\\
&\leq \frac{\rme\ln(F\delta)^{-1}}{\nu\epsilon}.
\end{align}

If $\mathscr{A}$ is  a coloring protocol with $m$ colors, then  the spectral gap reads $\nu=1/m$, and the number of tests required by the hedged coloring protocol $\mathscr{A}_*$  is given by \eref{eq:NumTestAdvHGhc} with $\nu=1/m$, that is,
\begin{align}\label{eq:NumTestAdvHGhcolor}
&N=\biggl\lfloor \frac{h_*(1/m)\ln(F\delta)^{-1}}{\epsilon}\biggr\rfloor\leq  \frac{(m+\rme-1)\ln(F\delta)^{-1}}{\epsilon}.
\end{align}
 If the coloring  is optimal, then we have $m=\chi(G)$ and $\nu=1/\chi(G)$, so that
\begin{align}\label{eq:NumTestAdvHGhcolorOpt}
&N=\biggl\lfloor \frac{h_*(1/\chi(G))\ln(F\delta)^{-1}}{\epsilon}\biggr\rfloor\leq  \frac{[\chi(G)+\rme-1]\ln(F\delta)^{-1}}{\epsilon}\nonumber\\
&\leq \frac{[\Delta(G)+\rme]\ln(F\delta)^{-1}}{\epsilon}\leq \frac{(n+\rme-1)\ln(F\delta)^{-1}}{\epsilon}.
\end{align}
Again, the  upper bound in \eref{eq:NumTestAdvHGhcolor} and the last three  bounds in \eref{eq:NumTestAdvHGhcolorOpt} still apply if the hedged coloring protocol $\mathscr{A}_*$  is replaced by  $\mathscr{A}_p$ with
$p=\nu/\rme$.
According to the above results, the hedged cover (or coloring)  protocol can achieve
the same optimal scaling behavior in the number $N$ of tests  with $\epsilon^{-1}$ and $\delta^{-1}$ as the counterpart in the nonadversarial scenario; cf.~\eref{eq:NumTestCoverUB}. For high-precision verification,  the overhead is at most three times in general and is even negligible when $\chi(G)$ is large.
Therefore, all the conclusions on the cover protocol presented in \sref{sec:cover} can easily be adapted to the adversarial scenario.

To illustrate the advantage of our approach,
suppose we want to verify a general $n$-qubit  hypergraph state within   infidelity $\epsilon$ and significance level $\delta=\epsilon$ in the adversarial scenario.   The protocol
in a recent paper \rcite{TakeM18} requires at least $(2\ln 2) n^3\epsilon^{-18}$ tests [applicable when $4n\epsilon\leq 1$, cf.  \eref{eq:NTM1} in \aref{sec:Comparison}]. By contrast, our coloring protocol  requires at most $\lceil n\epsilon^{-2}\rceil$ tests by \eref{eq:NumTestAdvHG}, note that $\nu(\Omega)\geq 1/n$ for any coloring protocol. The hedged coloring protocol requires about $n\epsilon^{-1}\ln \epsilon^{-1}$ tests if $n,\epsilon^{-1}\gg 1$. When $\delta=\epsilon=1/(4n)$ and the hypergraph is
3-colorable,  the protocol
in \rcite{TakeM18} requires at least $9.5\times 10^{10}n^{21}$ tests, which is  astronomical. By contrast, the optimal  cover or coloring protocol with $\nu(\Omega)=\gamma(G)=1/\chi(G)=1/3$ requires at most $12n(4n-1)$ tests by  \eref{eq:NumTestAdvHG}, which  outperforms \rcite{TakeM18} by at least 18 orders of magnitude even when $n=3$, and the advantage increases rapidly  with $n$, as illustrated in \fref{fig:NumberAdv}. The hedged cover or coloring protocol can further reduce the number to $\bigl\lfloor 16.3 n\ln\frac{16n^2}{4n-1}\bigr\rfloor$
by \eref{eq:NumTestAdvHGhcolorOpt}, given that $h_*(\nu=1/3)< 4.052$, which can be verified by straightforward numerical calculation.

\subsection{Application to demonstrating quantum supremacy}
Our protocols for verifying hypergraph states are  instrumental to demonstrating quantum supremacy.  Recently, on the basis of a plausible complexity theoretic assumption, Bremner,  Montanaro, and  Shepherd demonstrated the average-case hardness of sampling from probability distributions from instantaneous quantum polytime (IQP) circuits \cite{BremMS16} (Theorem~6 there). Their  work implies the hardness of sampling from probability distributions resulting from $X$ measurements on random order-3 hypergraph states such that the error in $\ell_1$-norm is bounded by $1/192$.

To guarantee that  the error in $\ell_1$-norm  is bounded by $1/192$ so as to demonstrate quantum supremacy, it suffices to ensure that the trace distance between the hypergraph state generated and the ideal target state is bounded by $1/192$.
Note that the trace distance $D_{\tr}(\rho_1,\rho_2)$ between two quantum states $\rho_1, \rho_2$  and the fidelity $F(\rho_1,\rho_2)$ satisfies
\begin{equation}
1-\sqrt{F(\rho_1,\rho_2)}\leq D_{\tr}(\rho_1,\rho_2)\leq \sqrt{1-F(\rho_1,\rho_2)},
\end{equation}
where the upper bound is saturated when $\rho_1$ and $\rho_2$ are pure \cite{NielC00book} (our definition of the fidelity is the square of the counterpart in \rcite{NielC00book}). To demonstrate quantum supremacy,
we need to verify the hypergraph state within infidelity $\epsilon=1/192^2$ and a given significance level $\delta$. If  $\delta=\epsilon$, then the number of tests required by the protocol in \rcite{TakeM18} is  more than $2\times 10^{82} n^3$ [applicable when $4n\epsilon\leq 1$, cf.~\eref{eq:NTM1}]. By contrast, our hedged coloring protocol requires only about $4\times 10^5 n$ tests according to \eref{eq:NumTestAdvHGhcolor} and  is thus dramatically more efficient.

\section{\label{sec:HGSGME}Certification  of genuine multipartite entanglement}
Here we show that the cover protocol and the hedged cover protocol are surprisingly efficient in certifying  GME of hypergraph states, although it is not necessarily optimized for this purpose. Recall that a multipartite pure state is GME if it is not biseparable, that is, if it cannot
be written as a tensor product of two pure states.
A mixed state is GME if it cannot be expressed as a convex mixture of biseparable states \cite{GuhnT09}.

\subsection{Nonadversarial scenario}
\begin{theorem}\label{thm:GenuineEnt}
	Let  $G$ be a connected order-$k$ hypergraph and $|G\>$ the corresponding hypergraph state. If a state $\rho$ satisfies $\<G|\rho|G\> > 1-2^{1-k}$,  then 	$\rho$ is GME.
\end{theorem}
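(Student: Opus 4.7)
The plan is to prove the contrapositive: every biseparable state $\rho$ satisfies $\langle G|\rho|G\rangle \leq 1 - 2^{1-k}$. By convexity of the fidelity in $\rho$ it suffices to treat pure biseparable $|\phi\rangle = |\phi_A\rangle\otimes|\phi_B\rangle$ for an arbitrary nontrivial bipartition $V = A \sqcup B$; the standard Schmidt-decomposition estimate $|\langle G|\phi\rangle|^2 \leq \lambda_{\max}(\rho_A)$ with $\rho_A = \Tr_B|G\rangle\langle G|$ then reduces the task to showing $\lambda_{\max}(\rho_A) \leq 1 - 2^{1-k}$ for every such cut. Connectedness of $G$ guarantees the existence of a hyperedge $e$ of order $r \leq k$ crossing the cut; I would fix a vertex $v \in e \cap A$.

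The analytical workhorse is a Schwartz--Zippel-type bound over $\bbF_2$: for any nonconstant polynomial $p:\bbF_2^m \to \bbF_2$ of degree at most $d$,
\begin{equation*}
\bigl|\bbE_x[(-1)^{p(x)}]\bigr| \leq 1 - 2^{1-d},
\end{equation*}
which follows because a nonzero polynomial of degree $\leq d$ has at least $2^{m-d}$ zeros (applied also to $p+1$), giving $\Pr[p=1] \in [2^{-d}, 1-2^{-d}]$, so $|\bbE[(-1)^p]| = |1-2\Pr[p=1]| \leq 1-2^{1-d}$. Combined with the polynomial representation $|G\rangle = 2^{-n/2}\sum_x(-1)^{h(x)}|x\rangle$, where $h(x) = \sum_{e'\in E}\prod_{j\in e'}x_j$ has total degree $\leq k$, this yields the computational-basis entries $\rho_A(x_A,x_A') = 2^{-|A|}\bbE_{x_B}[(-1)^{p_{x_A,x_A'}(x_B)}]$ with $p_{x_A,x_A'}(x_B) = h(x_A,x_B)+h(x_A',x_B)$ of degree $\leq k-1$ in $x_B$ (any surviving monomial must contain at least one $x_A$-variable, so its $x_B$-part has degree $\leq k-1$). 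In the simplest case $A = \{v\}$ the bound is immediate: $\rho_v$ is a $2 \times 2$ matrix with diagonal $1/2$ and off-diagonal $\tfrac12\bbE[(-1)^{p_v}]$, where $p_v(x_B) = \sum_{e'\ni v}\prod_{j\in e'\setminus v}x_{B,j}$ is nonconstant of degree $\leq k-1$ thanks to the crossing hyperedge $e$, so $\lambda_{\max}(\rho_v) \leq \tfrac12 + \tfrac12(1-2^{2-k}) = 1-2^{1-k}$.

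The main obstacle is extending this single-vertex estimate to arbitrary $A$: the entrywise bound on $\rho_A$ is trivially saturated whenever $p_{x_A,x_A'}$ vanishes identically, so it alone does not control $\lambda_{\max}(\rho_A)$. I would address this by first absorbing every hyperedge lying entirely inside $A$ or entirely inside $B$ into a local unitary on the corresponding side; such $CZ$ operations are local across the cut and preserve $\lambda_{\max}(\rho_A)$, so WLOG $h$ reduces to its crossing part $h_{AB}$. Then $\lambda_{\max}(\rho_A) = 2^{-n}\|M\|_{\mathrm{op}}^2$ for the $\pm 1$ bipartite matrix $M(x_A,x_B) = (-1)^{h_{AB}(x_A,x_B)}$, and the block structure of $\rho_A$ induced by the equivalence $x_A \sim x_A' \Leftrightarrow p_{x_A,x_A'}\equiv 0$, combined with the polynomial estimate on cross-block entries and the trace constraint $\Tr(MM^\top)=2^n$, should yield the desired $\lambda_{\max}(\rho_A) \leq 1 - 2^{1-k}$. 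Convexity then lifts the pure-state bound to all biseparable mixed states, completing the proof by contrapositive.
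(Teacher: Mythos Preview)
Your reduction to bounding $\lambda_{\max}(\rho_A)$ over all nontrivial bipartitions is the same as the paper's Lemmas~\ref{lem:GenuineEntPure} and~\ref{lem:GenuineEnt}. The Schwartz--Zippel/Reed--Muller estimate and the single-vertex computation are correct and give a clean proof of $\lambda_{\max}(\rho_v)\le 1-2^{1-k}$. The paper instead bounds $\|\rho_A\|$ for \emph{every} $A$ by induction on $|V|$: it removes a vertex $n\in\overline{A}$, uses the decomposition $\varrho_{V\setminus\{n\}}=\tfrac12(|G_0\rangle\langle G_0|+|G_1\rangle\langle G_1|)$ into smaller hypergraph states, and applies the induction hypothesis to $G_0,G_1$ with a case analysis on whether $A$ remains adjacent to $B=\overline{A}\setminus\{n\}$ in each subhypergraph.

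The genuine gap in your proposal is the extension to $|A|\ge 2$. Your block decomposition is correct: after absorbing intra-$A$ and intra-$B$ edges, $\rho_A$ has rank equal to the number $t$ of equivalence classes and, in the basis $\{\mathbf 1_{C_i}/\sqrt{s_i}\}$, is represented by $2^{-|A|}D^{1/2}RD^{1/2}$ with $D=\mathrm{diag}(s_i)$ and $R_{ii}=1$, $|R_{ij}|\le 1-2^{2-k}$. But the trace constraint $\sum_i s_i=2^{|A|}$ together with these entrywise bounds does \emph{not} force $\lambda_{\max}\le 1-2^{1-k}$. Concretely, the diagonal block of the largest class already contributes $s_{\max}/2^{|A|}$, and this ratio can equal $1-2^{1-k}$ exactly (take a single crossing hyperedge $e$ with $|e\cap A|=k-1$, $|e\cap B|=1$). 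If in addition some off-diagonal $R_{ij}$ were nonzero, the variational principle would give $\lambda_{\max}>1-2^{1-k}$. So what actually needs to be shown is a structural fact---that whenever the maximal class is this large the cross-class correlations \emph{must} vanish---and neither the trace identity nor the Schwartz--Zippel bound on individual entries supplies it. Your sentence ``should yield'' is precisely where a real argument is required; the paper's inductive scheme circumvents this difficulty entirely by never confronting the full $2^{|A|}\times 2^{|A|}$ matrix.
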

This theorem was proved in \rcite{GhioMRB18}; see \aref{sec:GMEHGSproof} for an independent proof. Note that the conclusion is independent of the number  $n$  of qubits.
\Thref{thm:GenuineEnt} is known much earlier   when $|G\>$ is a graph state associated with a connected graph, in which case $\rho$ is GME if its  fidelity with $|G\>$ is  larger than one half \cite{TothG05, TothG05E,GuhnT09}.
In general, to certify the GME of the hypergraph state $|G\>$ with significance level $\delta$, we need to guarantee the fidelity $\<G|\rho|G\> > 1-2^{1-k}$ with significance level $\delta$.
Given a verification strategy $\Omega$, then
it suffices to perform
\begin{equation}\label{eq:NumTestGME}
N=\biggl\lceil
\frac{1}{\ln[1-2^{1-k}\nu(\Omega)]}\ln\delta\biggr\rceil\leq
\biggl\lceil
\frac{2^{k-1}}{\nu(\Omega)}\ln\delta^{-1}\biggr\rceil
\end{equation}
tests according to \eref{eq:NumTest} with $\epsilon=2^{1-k}$.

If $\Omega$ corresponds to the cover protocol $(\mathscr{A},\mu)$, then $\nu(\Omega)$ is equal to the cover strength $s(\mathscr{A},\mu)$ according to \thref{thm:CoverEfficiency}. If we choose the optimal cover protocol, then $\nu(\Omega)$ is equal to the independence degree $\gamma(G)$,
so the number of tests reduces to
\begin{align}
N&= \biggl\lceil
\frac{1}{\ln[1-2^{1-k}\gamma(G)]}\ln\delta\biggr\rceil\leq \bigl\lceil 2^{k-1}\chi(G)\ln \delta^{-1}\bigr\rceil\nonumber\\
&\leq \bigl\lceil 2^{k-1}[\Delta(G)+1]\ln \delta^{-1}\bigr\rceil\leq  \bigl\lceil 2^{k-1}n\ln \delta^{-1}\bigr\rceil, \label{eq:NumberMeasGME}
\end{align}
note that $1/\gamma(G)\leq \chi(G)\leq \Delta(G)+1\leq n$ according to \pref{pro:IndependenceDegree}.  For example, we have $N\leq 3\times 2^{k-1}\chi(G)$ when $\delta=0.05$.  For a given $\delta$, the number in \eref{eq:NumberMeasGME} is upper bounded by a constant that is independent of the number of qubits if $\gamma(G)$, $\chi(G)$, or $\Delta(G)$ is bounded. In particular,  GME of 2-colorable graph states [with $k=2$ and  $\gamma(G)=1/\chi(G)=1/2$] can be certified with only $\lceil \ln\delta/\ln(3/4)\rceil$ tests (11 tests when $\delta=0.05$);  for order-3 cluster states and Union Jack states  [with $k=3$ and  $\gamma(G)=1/\chi(G)=1/3$], it suffices to perform $\lceil \ln\delta/\ln(11/12)\rceil$  tests (35 tests when $\delta=0.05$). Incidentally,  efficient entanglement (not GME) verification of cluster states was also studied in \rcite{DimiD18}.

\subsection{Adversarial scenario}
Next, consider  certification of GME of hypergraph states in the adversarial scenario. Given  the cover protocol $(\mathscr{A},\mu)$  with verification operator $\Omega=\Omega(\mathscr{A},\mu)$, to certify the GME of $|G\>$  with significance level $\delta$, the minimal  number of required tests satisfies 
\begin{equation}\label{eq:NumberTestGMEAdv}
N\leq \biggl\lceil
\frac{2^{k-1}(1- \delta)}{\nu(\Omega)\delta}\biggr\rceil
\end{equation}
according to \eref{eq:NumTestAdvHG} with $\epsilon=2^{1-k}$,
where the spectral gap $\nu(\Omega)=s(\mathscr{A},\mu)$
depends on the specific cover protocol. For example, $\nu(\Omega)=\gamma(G)$ for the optimal cover protocol and $\nu(\Omega)=1/\chi(G)$ for the optimal coloring protocol.
When $\nu(\Omega)\geq 1/2$,  the lower bound in \eref{eq:NumTestAdvHG} is saturated,
so the number of tests reduces to
\begin{equation}\label{eq:NumberTestGMEAdv2}
N= \min\left\{
\biggl\lceil\frac{2^{k-1}(1- \delta)}{\nu(\Omega) \delta}\biggr\rceil,\; \biggl\lceil\frac{2^{k-1}}{\delta}-1\biggr\rceil\right\}.
\end{equation}

To improve the scaling of $N$ with $1/\delta$, we can apply the hedged cover protocol $(\mathscr{A},\mu)_*$  proposed in \sref{sec:HedgedCover}.
Then the number of  required tests  is given by \eref{eq:NumTestAdvHGhc}  with $\epsilon=2^{1-k}$, that is,
\begin{align}
&N=\bigl\lfloor 2^{k-1}h_*(\nu) \ln(F\delta)^{-1}\bigr\rfloor\leq  \frac{2^{k-1}\ln(F\delta)^{-1}}{\nu(1-\nu+\rme^{-1}\nu^2)}\nonumber\\
&\leq \frac{2^{k-1}(1+\rme \nu-\nu)\ln(F\delta)^{-1}}{\nu}\leq \frac{2^{k-1}\rme\ln(F\delta)^{-1}}{\nu} \label{eq:NumberTestGMEAdv3},
\end{align}
where $\nu=s(\mathscr{A},\mu)$ and $F=1-\epsilon =1-2^{1-k}$. If $\mathscr{A}$ denotes  the optimal coloring protocol, then  $\nu=1/\chi(G)$, and the number of tests required by the hedged coloring protocol $\mathscr{A}_*$ satisfies
\begin{align}\label{eq:NumberTestGMEAdv4}
N&\leq 2^{k-1}[\chi(G)+\rme-1]\ln[(1-2^{1-k})^{-1}\delta^{-1}]\nonumber\\
&\leq 2^{k-1}[\Delta(G)+\rme]\ln[(1-2^{1-k})^{-1}\delta^{-1}].
\end{align}
This equation is also applicable  if the  protocol $\mathscr{A}_*$  is replaced by  $\mathscr{A}_p$ with
$p=\nu/\rme$. These results are comparable  to the counterparts for the nonadversarial scenario presented in \eref{eq:NumberMeasGME}, especially when $k$, $\chi(G)$, and $\Delta(G)$ are large. Therefore, GME of hypergraph states can be certified efficiently even in the adversarial scenario as long as the order $k$ is bounded.
For example, GME of 2-colorable graph states  can be certified in the adversarial scenario with only $\lfloor 6.44\ln(2/\delta)\rfloor$ tests (23 tests when $\delta=0.05$) according to \eref{eq:NumberTestGMEAdv3}.  
For order-3 cluster states and Union Jack states, which are 3-colorable, it suffices to perform $\lfloor 16.3\ln(4/3\delta)\rfloor$  tests (53 tests when $\delta=0.05$).

Although detection or certification of GME has been discussed in many works, our approach is appealing for at least four reasons. First, our approach is based on quantum state verification, which can provide more precise information about the state than entanglement detection usually based on witness operators. Such information is crucial to many practical applications, including MBQC. Second, our approach requires much fewer measurement settings and tests than most previous works on the detection of GME. Third,
given a significance level, we can determine the number of required tests explicitly, which is not the case for most  previous works. Fourth, our approach can be applied to both nonadversarial scenario and adversarial scenario.

\section{\label{sec:QuditHGS}Verification of qudit hypergraph states}
Most previous verification protocols for hypergraph states only apply to the  qubit case \cite{MoriTH17,TakeM18}. Here  we show that the cover protocol and hedged cover protocol  can also be applied to qudit hypergraph states with minor modifications; in addition, most conclusions on the verification of qubit hypergraph states are still applicable in  the qudit case.
This merit is particularly appealing to both  theoretical studies and  practical applications.

\subsection{Qudit hypergraphs}
In the case of qudit, we need to revise the definition of hypergraphs to take into account multiplicities of hyperedges. Now a hypergraph  $G=(V,E,m_E)$ (also known as multihypergraph in the literature) is characterized by a set of vertices $V$ and a set of   hyperedges $E\subset \mathscr{P}(V)$ together with multiplicities specified by $m_E=(m_e)_{e\in E}$, where $m_e\in \bbZ_d$ with $m_e\neq 0$ and $\bbZ_d$ is the ring of integers modulo $d$ \cite{SteiRMG17,XionZCC18}. Nevertheless, almost all graph theoretic concepts considered in this work do not depend on the multiplicity vector $m_E$ and are defined in the same way as in the qubit case. To be specific, these concepts include
the order of a hyperedge and the hypergraph, the adjacency relation, the degree of a vertex and the hypergraph, clique and clique number, independent set and independence number,  (weighted) independence cover, cover strength, independence degree, and (fractional) chromatic number. Therefore, \pref{pro:IndependenceDegree} and its proof are applicable without any modification.

\subsection{Qudit hypergraph states}
The qudit Pauli group (also known as the Heisenberg-Weyl group) is generated by the following two generalized Pauli operators
\begin{equation}
X=\sum_{j\in \bbZ_d }|j+1\>\<j|, \quad Z=\sum_{j\in \bbZ_d } \omega^j |j\>\<j|,
\end{equation}
where $\omega=\rme^{2\pi\rmi/d}$ is a primitive $d$th root of unity.
Given any qudit hypergraph $G=(V,E,m_E)$ with $n$ vertices, we can construct an $n$-qudit hypergraph state $|G\>$ as follows: prepare the quantum state $|+\>:=\frac{1}{\sqrt{d}}\sum_{j\in \bbZ_d}|j\>$ (eigenstate of $X$ with eigenvalue~1) for each vertex of $G$ and apply $m_e$ times the generalized controlled-$Z$ operation $CZ_e$ on the vertices of each hyperedge $e$ \cite{SteiRMG17,XionZCC18},
that is,
\begin{equation}
|G\>=\Biggl(\prod_{e\in E} CZ_e^{m_e}\Biggr)|+\>^{\otimes n}.
\end{equation}
To simplify the notation, here we only give the expression of $CZ_e$ when $e=\{1,2,\ldots,k\}$, in which case we have
\begin{equation}
CZ_e:=\!\sum_{j_1, j_2,\ldots, j_k\in \bbZ_d}\omega^{j_1 j_2\ldots j_k} |j_1, j_2,\ldots, j_k\>\<j_1, j_2,\ldots, j_k|;
\end{equation}
the general case is defined analogously.

Alternatively, $|G\>$ is the unique eigenstate (up to a global phase factor) with eigenvalue 1  of the $n$ commuting (nonlocal) stabilizer operators \cite{SteiRMG17,XionZCC18}
\begin{equation}\label{eq:StabilizerQuditHGS}
K_j =X_j\otimes \prod_{e\in E|\, e\ni j	} CZ_{e\setminus \{j\}}^{m_e}, \quad j=1,2,\ldots, n.
\end{equation}
Note that  $K_j^d=\id$, so all eigenvalues of $K_j$ are powers of $\omega$.
As in the qubit case, graph theoretic concepts pertinent to the hypergraph $G$ also apply to the corresponding hypergraph state $|G\>$.

\subsection{Verification of qudit hypergraph states}
The following protocol for verifying qudit hypergraph states is a simple variation of the cover protocol for verifying qubit hypergraph states presented in \sref{sec:cover}.

Let $G=(V,E,m_E)$ be a qudit hypergraph and $|G\>$ the associated hypergraph state. Choose an independence cover  $\mathscr{A}=\{A_1, A_2, \ldots \}$  of $G$ and let $\overline{A}_l:=V\setminus A_l$ be the complement of $A_l$ in $V$. Then we can construct a verification protocol with $|\scrA|$ distinct tests (measurement settings): the $l$th test consists in measuring  $X_j$ for all  $j\in A_l$ and measuring $Z_k$ for all $k\in \overline{A}_l$. By measuring $X_j$ ($Z_k$) we mean the measurement on the eigenbasis of $X_j$ ($Z_k$).
The
measurement outcome on the $a$th qubit for $a=1,2,\ldots, n$ can be written as $\omega^{o_a}$, where $o_a\in \bbZ_d$.
Note that $X_j$ and $Z_k$ commute with $K_i$ for all $i,j\in A_l$ and $k\in \overline{A}_l$. In addition, the joint eigenstate of $X_j$ and $Z_k$ corresponding to the outcome $\{o_a\}$
is  an eigenstate of $K_i$,
whose eigenvalue is given by $\omega^{t_i}$ with
\begin{equation}
t_i=o_i+ \sum_{e\in E| e \ni i}m_e \prod_{k\in e, k\neq i}o_k
\end{equation}
according to \eref{eq:StabilizerQuditHGS}.
The test is passed if $\omega^{t_i}=1$ for all $i\in A_l$. The projector onto the pass eigenspace associated with the $l$th test reads
\begin{equation}
P_l=\prod_{i\in A_l}\biggl(\frac{1}{d}\sum_{b\in \bbZ_d} K_i^b\biggr).
\end{equation}
A state can pass all tests iff it is stabilized by  $K_i$  for all $i\in V$. So only the target state $|G\>$ can pass all tests with certainty as desired.

Suppose the $l$th test $P_l$ (associated with $A_l$) is applied with probability $\mu_l$. The efficiency of the resulting protocol is determined by the spectral gap of the verification operator $\Omega(\mathscr{A},\mu)=\sum_{l=1}\mu_l P_l$.  Here the  common eigenbasis of $K_i$ for $i\in V$ also form an eigenbasis of $\Omega(\mathscr{A},\mu)$. Each eigenstate $|\Psi_x\>$ in this basis is specified by  a string  $x\in \bbZ_d^n$ and  satisfies $K_i|\Psi_x\>=\omega^{x_i}|\Psi_x\>$.  The corresponding eigenvalue of $\Omega(\mathscr{A},\mu)$ reads
\begin{equation}
\lambda_x= \sum_{l|\supp(x)\subset \overline{A}_l} \mu_l,
\end{equation}
where $\supp(x):=\{i\,|\, x_i\neq 0\}$.
The second largest eigenvalue of $\Omega(\mathscr{A},\mu)$ can be attained when  $x_i=0$ for all $i\in V$ except for one of them. So we have
\begin{align}
\beta(\Omega(\mathscr{A},\mu))&=\max_{i\in V}\sum_{l| \overline{A}_l\ni i } \mu_l,\\
\nu(\Omega(\mathscr{A},\mu))&=\min_{i\in V}\sum_{l| A_l\ni i } \mu_l =s(\mathscr{A},\mu),
\end{align}
as in the case of qubit hypergraph states. Similarly, the smallest eigenvalue of $\Omega(\mathscr{A},\mu)$ is attained when all dits of $x$ are nonzero, in which case we have $\lambda_x=0$. Again,  the verification operator $\Omega(\mathscr{A},\mu)$ is always singular.

In addition, the hedged cover protocol can be generalized to qudit hypergraph states according to the same recipe presented in \sref{sec:HedgedCover}.
Moreover, \thref{thm:CoverEfficiency} and \esref{eq:vcoloring}-\eqref{eq:NumTestAdvHGhcolorOpt} are still applicable  in the qudit case.

\section{\label{sec:summary}Summary}  We  proposed a simple protocol---the cover protocol---for verifying (qubit and qudit) hypergraph states which  requires only two distinct Pauli measurements for each party. This protocol is dramatically more efficient than all previous protocols based on local measurements and is comparable to  the best protocol based on entangling measurements. In general,  the overhead is bounded by the chromatic number and degree of the underlying hypergraph. For many interesting hypergraph states, including Union Jack states, the number of required tests is even independent of the number of qubits.
Our protocol enables the verification of hypergraph states and  GME of thousands of qubits, which is instrumental to many applications in  quantum information processing.  Moreover, we proposed the hedged cover protocol which can be applied to verify hypergraph states and GME in the adversarial scenario
with almost the same efficiency as in the nonadversarial scenario. 
This protocol is thus  particularly appealing to many applications that require high security conditions,  such as blind MBQC and quantum networks.

\acknowledgments
We  are grateful to a referee for pointing out the relation between the independence degree and  fractional chromatic number. HZ is grateful to  Zhibo Hou and Jiangwei Shang for discussions. This work is  supported by  the National Natural Science Foundation of China (Grant No. 11875110).
 HZ acknowledges financial support from  the Excellence Initiative of the German Federal and State Governments Zukunftskonzept
 (ZUK~81) and the Deutsche Forschungsgemeinschaft (DFG) in the early stage of this work. MH was supported in part by
 Fund for the Promotion of Joint International Research (Fostering Joint International Research) Grant No. 15KK0007,
 Japan Society for the Promotion of Science (JSPS) Grant-in-Aid for Scientific Research (A) No. 17H01280, (B) No. 16KT0017, and Kayamori Foundation of Informational Science Advancement.

\appendix
\section*{Appendix}
In this appendix, we prove \pref{pro:IndependenceDegree} and provide additional details on hypergraphs, which are instructive to understanding the verification of hypergraph states. Then we  present an independent proof of   \thref{thm:GenuineEnt}, which was originally proved in \rcite{GhioMRB18}.
Finally we compare our work with previous works and demonstrate the advantage of our approach.
\section{\label{sec:HypergraphApp}Cover strengths and independence degrees of Hypergraphs}
In this section we  prove \pref{pro:IndependenceDegree} and provide additional details on independence degrees and related  invariants of hypergraphs. Here we would like to thank an anonymous referee for pointing out that the independence degree is actually equal to the inverse  fractional chromatic number  \cite{ScheU97}. In view of this fact, all conclusions presented here are well known. For the convenience of the readers, nevertheless, we give elementary proofs of the main conclusions, assuming little background on graph theory.

\subsection{Fractional coloring and fractional chromatic number}
Let $G=(V,E)$ be a hypergraph with vertex set $V$ and hyperedge set $E$.
Let $\scrI$ be the set of all independent sets of $G$.  A fractional coloring  of $G$ \cite{HiltRS73,ScheU97} is a mapping $g$ from $\scrI$ to nonnegative real numbers such that
\begin{equation}\label{eq:FractionalColorCon}
\sum_{A\in \scrI(G), \, A\ni j} g(A)\geq 1\quad \forall j\in V.
\end{equation}
The weight  $w(g)$ of the fractional coloring $g$ is defined as  the sum of $g(A)$ over all independent sets in $\scrI$, that is, $w(g)=\sum_{A\in \scrI} g(A)$. The fractional chromatic number $\chi_f(G)$ of $G$ is the minimum weight over all fractional colorings of $G$ \cite{ScheU97}. Note that there are several equivalent definitions of the fractional chromatic number; here we have chosen the definition that is the most convenient for the current study. The fractional chromatic number would reduce to the usual chromatic number if $g(A)$ for each $A\in \scrI$ could take on only two possible values 0 and~1.

Given any fractional coloring $g$ of $G$, we can construct a weighted independence cover  $(\scrI,\mu)$ of $G$ by setting $\mu(A)=w(g)^{-1} g(A)$.
Here the set $\scrI$ appearing in $(\scrI,\mu)$ can be replaced by the independence cover $\{A\in \scrI\,|\, g(A)>0\}$. 
 Conversely, given any weighted independence cover $(\scrA, \mu)$ of $G$ with nonzero cover strength $s(\scrA,\mu)$, we can construct a fractional coloring $g$ of $G$ by setting
\begin{align}
g(A)&=\frac{\mu(A)}{s(\scrA,\mu)}, \quad  A \in \scrA,
\\ g(A)&=0\quad \forall A\in (\scrI\setminus \scrA).
\end{align}
Note that $\scrA$ can also be replaced by $\scrI$ by adding zero components in $\mu$; the resulting weighted independence cover is taken to be identical to the original one. The above discussions establish a one-to-one correspondence between weighted independence covers of $G$ with nonzero cover strengths and fractional colorings $g$ which can saturate the inequality in \eref{eq:FractionalColorCon} for at least one vertex $j\in V$.  Therefore, the independence degree is equal to the inverse  fractional chromatic number \cite{ScheU97}, that is,
\begin{equation}\label{eq:InddegreeFCN}
\gamma(G)=\frac{1}{\chi_f(G)}.
\end{equation}

In view of the above observations, every fractional coloring of $G$ can be applied  to construct a cover protocol for verifying the hypergraph state $|G\>$ as presented in the main text; conversely, every cover protocol determines a fractional coloring of $G$. So a cover protocol may also be referred to as a fractional coloring protocol.

\subsection{Proof of \pref{pro:IndependenceDegree}}
Here we present an elementary and self-contained proof of \pref{pro:IndependenceDegree}. This conclusion is well known; in particular, the inequalities concerning the independence degree can be found in Chapter 3 of  \rcite{ScheU97} in view of the relation $\gamma(G)=1/\chi_f(G)$ in \eref{eq:InddegreeFCN}.
\begin{proof}
	The inequality $\frac{1}{\Delta(G)+1}\leq \frac{1}{\chi(G)}$  is equivalent to  $\chi(G)\leq \Delta(G)+1$ and follows from a well-known greedy algorithm, which produces a coloring of $G$ with no more than $\Delta(G)+1$ colors. The following algorithm is a simplified version of the Dsatur algorithm  introduced in \rcite{Brel79}. Let $v_1, v_2, \ldots, v_n$ be the vertices of $G$ whose degrees are in decreasing order. Use natural numbers to represent colors and assign color 1 to $v_1$. The colors of other vertices are assigned inductively as follows. Suppose the colors of $v_1, v_2, \ldots, v_{j-1}$ for $j\leq n$ have been assigned. Then the color number of $v_{j}$ is the smallest natural number that is different from the color numbers of those vertices in the set $\{v_1, v_2, \ldots, v_{j-1}\}$ that are adjacent to $v_{j}$. Since $v_{j}
	$ has at most $\min\{\deg(v_{j}),j-1\}$ neighbors in this set, where $\deg(v_{j})$ is the degree of $v_j$, it follows that  the color number of $j$ is at most  $\min\{\deg(v_{j})+1,j\}$. Therefore,
	\begin{equation}
	\chi(G)\leq \max_{j}  \min\{\deg(v_{j})+1,j\}\leq  \Delta(G)+1.
	\end{equation}
	
	The inequality $\gamma(G)\geq 1/\chi(G)$ follows from the observation that any  independence cover (or coloring) of $G$ with  $\chi(G)$ elements and  uniform weights has cover strength $1/\chi(G)$. Alternatively, this inequality follows from the relation $\gamma(G)=1/\chi_f(G)$ and the inequality $\chi_f(G)\leq \chi(G)$ \cite{ScheU97}.
	
	To prove the inequality $\gamma(G)\leq \alpha(G)/|V|$, let $(\mathscr{A},\mu)$ be an arbitrary independence cover. Then we have
	\begin{align}
	&|V|s(\mathscr{A},\mu)=|V|\min_{j\in V}\sum_{l|A_l \ni j} \mu_l\leq \sum_j\sum_{l|A_l \ni j}\mu_l\nonumber\\
	&=\sum_{l} \mu_l |A_l|\leq \alpha(G)\sum_l\mu_l=\alpha(G),
	\end{align}
	which implies that  $\gamma(G)\leq \alpha(G)/|V|$.

	To prove the inequality $\gamma(G)\leq 1/\varpi(G)$, let $V_{\rmC}$ be a subset of $\varpi(G)$ vertices in $V$
	that forms a clique. Then
	\begin{align}
	&\varpi(G)s(\mathscr{A},\mu)=\varpi(G)\min_{j\in V}\sum_{l|A_l \ni j} \mu_l\leq
	\sum_{j\in V_\rmC}\sum_{l|A_l \ni j}\mu_l\nonumber\\
	&\leq\sum_{l} \mu_l= 1,
	\end{align}
	where the second inequality follows from the fact that each independent set  $A_l$ can contain at most one vertex in the clique $V_\rmC$.	
\end{proof}

\subsection{\label{sec:ColorMinCover}Cover strengths of colorings and minimal covers}

Let $G=(V,E)$ be a hypergraph and
$(\scrA,\mu)$ a weighted independence cover constructed from a coloring $\scrA$, assuming that no independent set in $\scrA$ is empty (note that empty independent sets cannot increase the cover strength). Then each vertex of $V$ is contained in only one independent set in $\scrA$, which implies that
\begin{equation}\label{eq:CoverStrColoring}
s(\mathscr{A},\mu)=\min_{l} \mu_l\leq |\scrA|^{-1}\leq \chi(G)^{-1}.
\end{equation}
Here the first inequality is saturated iff all weights $\mu_l$ are equal, and the second inequality is saturated iff the coloring $\scrA$ is optimal in the sense that no other coloring of $G$ requires fewer colors.

Next, let $(\scrA,\mu)$ be a weighted independence cover of $G$ constructed from a minimal cover $\scrA$. By "minimal" we mean that any proper subset $\scrA'$ of $\scrA$ is not a cover of $G$ because the union of independent sets in $\scrA'$ does not coincide with the vertex set $V$. In other words, for any $A_l$ in $\scrA$, there exists a vertex $j\in V$ such that $j\in A_l$ and $j\notin A_k$ for all $k\neq l$. Therefore,  \begin{equation}\label{eq:CoverStrMinimalC}
s(\mathscr{A},\mu)=\min_{l} \mu_l\leq |\scrA|^{-1}\leq \chi(G)^{-1}
\end{equation}
as in \eref{eq:CoverStrColoring}. Again the first inequality is saturated iff all weights $\mu_l$ are equal;  the second inequality is saturated iff $|\scrA|=\chi(G)$, in which case an optimal coloring of $G$ can be constructed from $\scrA$ by deleting some vertices in some independent sets if $\scrA$ is not yet a coloring.

In view of the above discussion, to maximize the cover strength it is always beneficial to choose uniform weights when $\scrA$ is a coloring or  minimal independence cover.
In addition, the cover strength of any such  cover is upper bounded by $1/\chi(G)$, which can be saturated.

\subsection{Independence degrees of odd cycles}
Let $C_n$ be a  cycle with $n$ vertices, where $n\geq 3$ is an odd integer. According to Proposition 3.1.2 in \rcite{ScheU97}, the fractional chromatic number of $C_n$ reads
\begin{equation}
\chi_f(C_n)=\frac{2n}{n-1}. 
\end{equation}
Thanks to \eref{eq:InddegreeFCN},
the independence degree of $C_n$   is thus given by
\begin{equation}\label{eq:IndDegreeOddC}
\gamma(C_n)=\frac{n-1}{2n}=\frac{1}{2}-\frac{1}{2n},
\end{equation}
which increases monotonically with $n$.
This conclusion  indicates  that overcomplete covers of some hypergraph $G$ can have cover strengths larger than $1/\chi(G)$ and that the inequality $\gamma(G)\geq 1/\chi(G)$ in \pref{pro:IndependenceDegree} is in general strict. By contrast, the cover strength of any  coloring or minimal cover  of $C_n$ is upper bounded by $1/3$ given that $\chi(C_n)=3$. So it is indeed advantageous to consider weighted independence covers (or fractional colorings) beyond usual colorings for some hypergraphs. These observations are of interest to constructing  efficient verification protocols for hypergraph states (including graph states in particular) in view of \thref{thm:CoverEfficiency} in the main text.

To prove \eref{eq:IndDegreeOddC}, note that $\alpha(C_n)=(n-1)/2$, so that  $\gamma(C_n)\leq (n-1)/(2n)$ according to \pref{pro:IndependenceDegree}. This upper bound can be saturated by the equal-weight cover composed of the $n$ independent sets
\begin{equation}\label{eq:OddCycleCover}
A_j=\{j, j+2, \ldots, j+n-3\},\quad  j=1,2,\ldots, n.
\end{equation}
Here vertex labels $j$ and $j+n$ are identified.

\section{\label{sec:GMEHGSproof}Proof of \thref{thm:GenuineEnt}}
In this section we present an independent proof of \thref{thm:GenuineEnt}, which was originally proved in \rcite{GhioMRB18}.
This theorem is an immediate consequence of \lsref{lem:GenuineEnt} and~\ref{lem:kappaG} presented below. Before stating and proving these  auxiliary results, we need to introduce a few additional concepts. Let $|\Psi\>\<\Psi|$ be an $n$-partite pure state of the parties $V=\{1,2,\ldots, n\}$. For each nonempty proper subset $A$ of $V$, denote by $\varrho_A$ the reduced state of $|\Psi\>\<\Psi|$ over the parties in $A$, that is,  $\varrho_A=\tr_{\overline{A}}(|\Psi\>\<\Psi|)$, where $\overline{A}=V\setminus A$ is the complement of $A$ in $V$. Define
\begin{equation}
\kappa(|\Psi\>):=\max_{A}\|\varrho_{A}\|,
\end{equation}
where $\|\varrho_{A}\|$ denotes the operator norm (the largest eigenvalue) of $\varrho_A$  and the maximum is taken over all nonempty proper subsets $A$ of $V$. Note that $\kappa(|\Psi\>)$ is invariant under local unitary transformations. Given a hypergraph $G$, define $\kappa(G):=\kappa(|G\>)$.

\Lsref{lem:GenuineEntPure} and \ref{lem:GenuineEnt} below were known previously \cite{GuhnT09}, but here we provide self-contained proofs for completeness.
\begin{lemma}\label{lem:GenuineEntPure}
	The pure state $|\Psi\>$ is GME iff $\kappa(|\Psi\>)<1$.
\end{lemma}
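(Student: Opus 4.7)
The plan is to reduce the lemma to a standard fact about the Schmidt decomposition, namely that a bipartite pure state factorizes across a cut if and only if its reduced state on one side is pure, which in turn is equivalent to that reduced state having operator norm equal to $1$.

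First I would recall the definition: a pure state $|\Psi\>$ is biseparable (the negation of GME) iff there exists some nonempty proper subset $A\subset V$ and pure states $|\psi_A\>$, $|\psi_{\overline{A}}\>$ such that $|\Psi\>=|\psi_A\>\otimes |\psi_{\overline{A}}\>$. Equivalently, writing the Schmidt decomposition
\begin{equation}
|\Psi\>=\sum_k \sqrt{\lambda_k}\,|k\>_A|k\>_{\overline{A}}, \qquad \sum_k\lambda_k=1,
\end{equation}
such a factorization exists across the cut $(A,\overline{A})$ iff exactly one Schmidt coefficient $\lambda_k$ equals $1$ and the rest vanish. Since $\varrho_A=\sum_k\lambda_k|k\>\<k|_A$, the operator norm satisfies $\|\varrho_A\|=\max_k\lambda_k$, so $\|\varrho_A\|=1$ iff $|\Psi\>$ is a product state across the cut $(A,\overline{A})$.

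Second, I would note that $\varrho_A$ is a density operator, so its eigenvalues lie in $[0,1]$ and $\|\varrho_A\|\leq 1$, with equality iff one eigenvalue hits $1$. Hence $\kappa(|\Psi\>)=\max_A\|\varrho_A\|\leq 1$ always, and $\kappa(|\Psi\>)=1$ iff there exists at least one nonempty proper $A\subset V$ such that $\|\varrho_A\|=1$, iff $|\Psi\>$ factorizes across some bipartition, iff $|\Psi\>$ is biseparable.

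Combining these two observations yields the equivalence $\kappa(|\Psi\>)<1\iff |\Psi\>$ is GME, proving the lemma. There is no real obstacle here; the only content is the Schmidt-decomposition characterization of product states across a cut, which is a standard textbook fact.
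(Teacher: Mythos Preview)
Your proof is correct and follows essentially the same approach as the paper: both reduce the lemma to the contrapositive ``$|\Psi\>$ is biseparable iff $\kappa(|\Psi\>)=1$'' and then use the standard fact that a bipartite pure state factorizes across a cut iff the reduced state on one side is pure (equivalently, has operator norm $1$). The only difference is that you spell out this fact via the Schmidt decomposition, whereas the paper simply invokes it without elaboration.
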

\begin{proof}
	To prove the lemma, it is equivalent to prove the statement that the state $|\Psi\>$ is biseparable iff $\kappa(|\Psi\>)=1$. If  $\kappa(|\Psi\>)=1$, then $|\Psi\>$ has a nontrivial reduced state that is pure, which implies that $|\Psi\>$ is biseparable. Conversely, if $|\Psi\>$ is biseparable, then it has a nontrivial reduced state that is pure, which implies that  $\kappa(|\Psi\>)=1$.
\end{proof}

\begin{lemma}\label{lem:GenuineEnt}
	Suppose $|\Psi\>$ is GME and a state $\rho$  satisfies  $\<\Psi|\rho|\Psi\>> \kappa(|\Psi\>)$. Then $\rho$ is GME.
\end{lemma}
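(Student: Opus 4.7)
The plan is to prove the contrapositive: if $\rho$ is biseparable, then $\<\Psi|\rho|\Psi\>\leq \kappa(|\Psi\>)$. Since every biseparable mixed state is a convex combination of pure biseparable states (with bipartitions possibly varying across the sum), and since $\<\Psi|\rho|\Psi\>$ is linear in $\rho$, it is enough to establish the bound for a single pure biseparable state $|\phi\>_A\otimes|\psi\>_{\overline{A}}$ and then average.

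The key step is therefore to show that, for every nonempty proper subset $A\subset V$ and all unit vectors $|\phi\>\in\mathcal{H}_A$, $|\psi\>\in\mathcal{H}_{\overline{A}}$, one has $|\<\Psi|\phi\otimes\psi\>|^2\leq \|\varrho_A\|$. I would do this by introducing the Schmidt decomposition $|\Psi\>=\sum_k\sqrt{\lambda_k}|u_k\>_A|v_k\>_{\overline{A}}$ across the bipartition $A|\overline{A}$, expanding $\<\Psi|\phi\otimes\psi\>=\sum_k\sqrt{\lambda_k}\<u_k|\phi\>\<v_k|\psi\>$, and applying the Cauchy--Schwarz inequality together with $\sum_k|\<v_k|\psi\>|^2\leq 1$. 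This yields
\begin{equation}
|\<\Psi|\phi\otimes\psi\>|^2\leq \sum_k\lambda_k|\<u_k|\phi\>|^2=\<\phi|\varrho_A|\phi\>\leq \|\varrho_A\|\leq\kappa(|\Psi\>).
\end{equation}

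Averaging this pointwise bound over any convex decomposition of an arbitrary biseparable $\rho$ gives $\<\Psi|\rho|\Psi\>\leq \kappa(|\Psi\>)$, which contradicts the hypothesis $\<\Psi|\rho|\Psi\>>\kappa(|\Psi\>)$ and forces $\rho$ to be GME. The only subtle point is keeping track of the fact that different pure components in the mixture may be biseparable with respect to different bipartitions $A_i|\overline{A}_i$; this is handled automatically because $\kappa(|\Psi\>)$ was defined as the maximum of $\|\varrho_A\|$ over all nonempty proper $A\subset V$, so the same constant upper-bounds each term in the convex sum. I do not anticipate a genuine obstacle: the argument is a short Cauchy--Schwarz estimate plus linearity, and invoking \lref{lem:GenuineEntPure} (which guarantees $\kappa(|\Psi\>)<1$ whenever $|\Psi\>$ is GME) ensures the threshold in the hypothesis is nontrivial.
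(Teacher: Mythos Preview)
Your proof is correct and follows essentially the same approach as the paper's: both argue by contrapositive, establish the key bound $|\<\Psi|\phi\otimes\psi\>|^2\leq \<\phi|\varrho_A|\phi\>\leq \|\varrho_A\|\leq\kappa(|\Psi\>)$ for an arbitrary pure product state across an arbitrary bipartition, and then extend to mixed biseparable states by convexity. The only difference is cosmetic: the paper asserts the inequality $|\<\Psi|\Phi\>|^2\leq \<\Phi_A|\varrho_A|\Phi_A\>$ directly as a one-liner, whereas you unpack it via the Schmidt decomposition and Cauchy--Schwarz, which is a perfectly valid (and more explicit) justification of the same step.
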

\begin{proof}
	Suppose  $|\Phi\>$ is an arbitrary pure   state that is biseparable over the partition $A$ and $\overline{A}$, that is, $|\Phi\>$ has the form  $|\Phi\>=|\Phi_A\>\otimes |\Phi_{\overline{A}}\>$. Then
	\begin{equation} |\<\Psi|\Phi\>|^2\leq \<\Phi_A|\varrho_A|\Phi_A\>\leq \|\varrho_A\|
	\leq \kappa(|\Psi\>),
	\end{equation}
	where $\varrho_A$ is the reduced state of $|\Psi\>\<\Psi|$ over the parties in $A$.
	If $\rho$ is not GME, then it is a convex combination of biseparable pure states, so that $\<\Psi|\rho|\Psi\>\leq \kappa(|\Psi\>)$. Therefore, $\rho$ is GME whenever $\<\Psi|\rho|\Psi\>> \kappa(|\Psi\>)$.
\end{proof}

\begin{lemma}\label{lem:kappaG}
	Suppose $G=(V,E)$ is a connected order-$k$ hypergraph with $k\geq 2$. Then $\kappa(G)\leq 1-2^{1-k}$.
\end{lemma}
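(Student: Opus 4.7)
The plan is to combine a simple combinatorial fact about Boolean polynomials with the $Z$-basis expansion of $|G\rangle$. The combinatorial key is the following lemma: \emph{for any nonconstant multilinear polynomial $p\in\mathbb{F}_2[x_1,\ldots,x_m]$ of degree $d\geq1$, both $|p^{-1}(0)|$ and $|p^{-1}(1)|$ are at least $2^{m-d}$}; equivalently, $|\mathbb{E}_x(-1)^{p(x)}|\leq 1-2^{1-d}$. I would prove this by induction on $m$. Pick a variable $x_i$ appearing in a top-degree monomial, decompose $p(x)=q(x_{\bar i})+x_i\,r(x_{\bar i})$ with $r\neq0$ of degree $\leq d-1$, and apply the induction hypothesis to $r$ to conclude that the set $\{x_{\bar i}:p|_{x_i=0}\neq p|_{x_i=1}\}=\{r=1\}$ has cardinality at least $2^{m-d}$, so $p$ takes each value at least that often. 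Applying the same argument to $1-p$ handles the other side.

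First, I would settle the single-vertex cut $A=\{v\}$. Writing $|G\rangle=\tfrac{1}{\sqrt2}\bigl(|0\rangle_v|G_0\rangle+|1\rangle_v|G_1\rangle\bigr)$ in the $Z$-basis on $v$ gives $\|\varrho_{\{v\}}\|=\tfrac12\bigl(1+|\langle G_0|G_1\rangle|\bigr)$, and a direct computation yields
\[
\langle G_0|G_1\rangle=\frac{1}{2^{n-1}}\sum_{x_{\bar v}}(-1)^{g(x_{\bar v})},\qquad g(x_{\bar v})=\sum_{e\in E,\,e\ni v}\prod_{i\in e\setminus\{v\}}x_i\pmod{2}.
\]
Because $G$ is connected and $|V|\geq2$, the vertex $v$ sits in at least one hyperedge, and the distinct hyperedges through $v$ produce distinct monomials in $g$ that cannot cancel over $\mathbb{F}_2$. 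Hence $g$ is a nonzero polynomial of degree $k_v-1$, where $k_v\leq k$ is the maximum order of a hyperedge through $v$. The lemma with $d=k_v-1$ then gives $|\langle G_0|G_1\rangle|\leq 1-2^{2-k_v}$, so $\|\varrho_{\{v\}}\|\leq 1-2^{1-k_v}\leq 1-2^{1-k}$.

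It remains to extend the bound to an arbitrary bipartition $A|\bar A$ with $|A|,|\bar A|\geq1$. A first simplification is a local-unitary reduction: any hyperedge lying entirely inside $A$ (or $\bar A$) contributes a $CZ_e$ acting only within $A$ (or $\bar A$), which can be absorbed into a local unitary without changing $\|\varrho_A\|$; so one may assume every remaining hyperedge crosses the cut. The matrix elements of $\varrho_A$ in the $Z$-basis on $A$ are then Boolean exponential sums
\[
\langle y_A|\varrho_A|x_A\rangle=\frac{1}{2^n}\sum_{x_{\bar A}}(-1)^{F_{xy}(x_{\bar A})},\qquad F_{xy}(x_{\bar A})=\sum_{e\in E}\bigl(u_e(x_A)\oplus u_e(y_A)\bigr)\,v_e(x_{\bar A}),
\]
with $u_e(x_A)=\prod_{i\in e\cap A}x_i$ and $v_e(x_{\bar A})=\prod_{j\in e\cap\bar A}x_j$. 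Each monomial of $F_{xy}$ in $x_{\bar A}$ has degree $|e\cap\bar A|\leq k-1$, so the lemma bounds every off-diagonal entry of $\varrho_A$ with $F_{xy}\neq0$ by $2^{-|A|}(1-2^{2-k})$. Promoting these per-entry estimates to the operator-norm bound $\|\varrho_A\|\leq 1-2^{1-k}$ is the main obstacle: naive inequalities such as $\|\varrho_A\|\leq\|\varrho_{\{v\}}\|$ fail even for generic pure states, so the argument must exploit the block structure induced by the equivalence relation $x_A\sim y_A\iff u_e(x_A)=u_e(y_A)$ for every crossing hyperedge $e$. Under this relation $\varrho_A$ decomposes as a block-diagonal Gram matrix (whose blocks are rank-one, all-ones matrices scaled by $2^{-|A|}$) perturbed by off-block entries bounded by $2^{-|A|}(1-2^{2-k})$, and a careful spectral estimate of this structure—tightly coupled to the degree-$k-1$ polynomial bound—is what I expect to deliver the claimed inequality.
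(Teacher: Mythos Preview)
Your Boolean-bias lemma and the single-vertex case $A=\{v\}$ are correct. The gap is the general bipartition: you openly say that lifting the entrywise bounds to $\|\varrho_A\|\leq 1-2^{1-k}$ is ``the main obstacle'' and that a spectral estimate of the block structure is what you ``expect'' to work---but you do not carry it out, and the per-entry bound $|c(\mathbf{u}_1,\mathbf{u}_2)|\leq 1-2^{2-k}$ on the off-block correlations is by itself insufficient. Already in the two-block situation, if the blocks have sizes $s$ and $2^{|A|}-s$ and the off-block correlation equals $c$, the top eigenvalue of the reduced matrix is $\tfrac12\bigl(1+2^{-|A|}\sqrt{(2s-2^{|A|})^2+4s(2^{|A|}-s)c^2}\bigr)$, and plugging in the worst allowed $|c|=1-2^{2-k}$ exceeds $1-2^{1-k}$ unless $s=2^{|A|-1}$. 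What actually saves the inequality is an extra structural coupling you never invoke: if a crossing hyperedge $e$ contributes a large off-block bias (large $|e\cap\bar A|$), then $|e\cap A|=|e|-|e\cap\bar A|$ is small, which in turn constrains how unbalanced the equivalence classes on the $A$ side can be. Without articulating and exploiting that tradeoff, the general-$A$ step is not a proof.

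The paper sidesteps this difficulty entirely by inducting on $|V|$. After the same local-unitary cleanup you describe (remove hyperedges lying entirely in $A$ or in $\bar A$, drop isolated vertices), pick a vertex $n\in\bar A$ (one may assume $|\bar A|\geq 2$ since $\|\varrho_A\|=\|\varrho_{\bar A}\|$) and use the one-vertex marginal identity
\[
\varrho_{V\setminus\{n\}}=\tfrac12\bigl(|G_0\rangle\langle G_0|+|G_1\rangle\langle G_1|\bigr),
\]
where $G_0$ is $G$ with $n$ and all hyperedges through $n$ deleted, and $G_1$ is obtained by deleting $n$, shrinking each hyperedge $e\ni n$ to $e\setminus\{n\}$, and taking the symmetric difference with the surviving edge set. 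Tracing out $B:=\bar A\setminus\{n\}$ gives $\varrho_A=\tfrac12(\varrho_0+\varrho_1)$ with $\varrho_j$ the $A$-marginal of $|G_j\rangle$, hence $\|\varrho_A\|\leq\tfrac12(\|\varrho_0\|+\|\varrho_1\|)$. A short case analysis then finishes the induction: if $A$ is adjacent to $B$ in both $G_0$ and $G_1$, the hypothesis bounds each term by $1-2^{1-k}$; if $A$ is not adjacent to $B$ in $G_0$, then every hyperedge of $G$ contains $n$, which forces $G_1$ to have order at most $k-1$, so $\|\varrho_1\|\leq 1-2^{2-k}$ while $\|\varrho_0\|\leq 1$, and the average is $1-2^{1-k}$; the remaining case is symmetric. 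This vertex-deletion induction delivers the bound for every cut without any global spectral estimate on $\varrho_A$.
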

\begin{proof}
	This lemma is an easy consequence of \lref{eq:ReduceStateNorm} below.  When $|G\>$ is a connected graph state,
	\lref{lem:kappaG} is known much earlier \cite{TothG05, TothG05E,GuhnT09}, in which case the bound $\kappa(G)\leq 1-2^{1-k}$ with $k=2$ is always saturated. This conclusion follows from the fact that any nontrivial reduced density matrix of the graph state $|G\>\<G|$ is proportional to a projector of rank at least 2.
\end{proof}
Besides the application in proving \thref{thm:GenuineEnt}, \lref{lem:kappaG} shows that any order-$k$ hypergraph state $|G\>$ with $k\geq 2$ and $\kappa(G)= 1-2^{1-k}$ is not equivalent to any order-$k'$ hypergraph state with $k'<k$ under local unitary transformations.

The bound $\kappa(G)\leq 1-2^{1-k}$ is saturated if $G$ contains an order-$k$ leaf. Here a leaf of $G$ is a vertex that belongs to only one hyperedge with order at least 2. The order of the leaf is the order of this unique hyperedge.
In this case $\|\varrho_A\|=1-2^{1-k}$ when $A$ is composed of the leaf.
To verify this claim, it suffices to consider the scenario in which $n=|V|=k$ and $G$ contains a single hyperedge (which necessarily has order $k$). Now it is straightforward to verify that each single-qubit reduced state of $|G\>$ has two eigenvalues equal to $1-2^{1-k}$ and $2^{1-k}$, respectively, so the bound $\kappa(G)\leq 1-2^{1-k}$ is indeed saturated. In particular, the above observation implies that $\kappa(G)= 1-2^{1-k}$ when $|G\>$ is a 1D order-$k$ cluster state. Straightforward calculations also show that the bound $\kappa(G)\leq 1-2^{1-k}$ with $k=3$ is saturated for 2D order-3 cluster states and Union Jack states for which $\kappa(G)=3/4$.

\begin{lemma}\label{eq:ReduceStateNorm}
	Suppose $G=(V,E)$ is a hypergraph and $A$ is any nonempty proper subset of $V$ that is adjacent to $\overline{A}$. Let $\varrho_A$ be the reduced state of $|G\>$ over the parties in $A$.
	Then $\|\varrho_A\|\leq 1-2^{1-k}$, where $k$ is the maximum order of hyperedges that connect $A$ and $\overline{A}$.
\end{lemma}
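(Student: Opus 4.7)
The plan is to proceed by strong induction on $|V|$. The base case $|V|=2$ corresponds to a two-qubit graph state whose single edge gives $\varrho_A=I/2$, and hence $\|\varrho_A\|=1/2=1-2^{1-2}$.

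For the inductive step, fix any crossing hyperedge $e^*$ of order $k$ and choose a vertex $w$ to "measure out": if $V\neq e^*$ pick $w\in V\setminus e^*$, and otherwise (when $V=e^*$) pick $w\in e^*_{\overline{A}}$. In either case, by the duality $\|\varrho_A\|=\|\varrho_{\overline{A}}\|$ we may additionally assume $w\in\overline{A}$, and in the second case exchanging $A,\overline{A}$ if necessary also guarantees $|e^*_{\overline{A}}|\geq 2$ (which is possible because $|e^*_A|+|e^*_{\overline{A}}|=k\geq 3$ in the inductive range). Exploiting the identity $CZ_e=|0\>\<0|_w\otimes I+|1\>\<1|_w\otimes CZ_{e\setminus\{w\}}$ for every edge $e\ni w$, one decomposes
\begin{equation*}
|G\>=\frac{1}{\sqrt{2}}\bigl(|0\>_w\otimes|\alpha\>+|1\>_w\otimes|\beta\>\bigr),
\end{equation*}
where $|\alpha\>$ and $|\beta\>$ are (up to a possible global sign) hypergraph states on $V\setminus\{w\}$ whose edge sets are $\{e\in E:w\notin e\}$ and $\{e\in E:w\notin e\}\cup\{e\setminus\{w\}:e\in E,\,w\in e,\,|e|\geq 2\}$ respectively. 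Tracing out $w$ first and then $\overline{A}\setminus\{w\}$ yields $\varrho_A=\tfrac{1}{2}(\varrho_A^\alpha+\varrho_A^\beta)$.

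The two cases are now closed as follows. If $w\notin e^*$, the hyperedge $e^*$ persists intact in both $|\alpha\>$ and $|\beta\>$, so each of those hypergraph states still has maximum crossing order $k$; the inductive hypothesis applied to $V\setminus\{w\}$ then gives $\|\varrho_A^\alpha\|,\|\varrho_A^\beta\|\leq 1-2^{1-k}$, and averaging closes the bound. If instead $V=e^*$ and $w\in e^*_{\overline{A}}$, then $V\setminus\{w\}\subsetneq e^*$ has only $k-1$ vertices, so every edge of $|\alpha\>$ and $|\beta\>$ has order at most $k-1$; in particular, $|\beta\>$ inherits $e^*\setminus\{w\}$ as a crossing edge of order exactly $k-1$, crossing because $e^*_A\neq\emptyset$ and $|e^*_{\overline{A}}\setminus\{w\}|\geq 1$. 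Consequently $\|\varrho_A^\beta\|\leq 1-2^{2-k}$ by the inductive hypothesis and $\|\varrho_A^\alpha\|\leq 1$ trivially, so $\|\varrho_A\|\leq\tfrac{1}{2}(1+1-2^{2-k})=1-2^{1-k}$. The main obstacle is precisely this $V=e^*$ sub-case: it is only here that the argument actually has to use a reduction in the maximum crossing order, and the WLOG choice $|e^*_{\overline{A}}|\geq 2$ is what guarantees that $e^*\setminus\{w\}$ keeps crossing the reduced bipartition; this is the one delicate point that the strong induction on $|V|$ is tailored to handle.
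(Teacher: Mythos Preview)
Your overall strategy (strong induction on $|V|$, tracing out a vertex $w\in\overline{A}$, then bounding $\|\varrho_A\|$ by convexity from the two branch states) is exactly the paper's approach. However, there is a genuine error in the description of $|\beta\>$ that creates a gap in your Case~2.

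The edge set of $|\beta\>$ is not the union you wrote but the \emph{symmetric difference}
\[
\{e\in E:w\notin e\}\;\Delta\;\{e\setminus\{w\}:e\in E,\ w\in e\},
\]
because $CZ_f^2=\id$: if $e\in E$ with $w\notin e$ coincides with $e'\setminus\{w\}$ for some $e'\in E$ with $w\in e'$, the two contributions cancel. This breaks your Case~2. When $V=e^*$ and in addition $e^*\setminus\{w\}\in E$, the shrunk hyperedge $e^*\setminus\{w\}$ is cancelled in $|\beta\>$, so your claim that ``$|\beta\>$ inherits $e^*\setminus\{w\}$ as a crossing edge'' fails. A concrete counterexample: $V=\{1,2,3\}$, $E=\{\{1,2,3\},\{1,2\}\}$, $A=\{1\}$, $w=3$; here $|\beta\>=|++\>$ is a product state and $\|\varrho_A^\beta\|=1$, not $\leq 1-2^{2-k}$.

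The repair is short once you use the correct symmetric difference: $e^*\setminus\{w\}$ is an edge of \emph{exactly one} of $|\alpha\>$ or $|\beta\>$ (it lies in $|\alpha\>$'s edge set iff $e^*\setminus\{w\}\in E$, and in $|\beta\>$'s iff not), and in either case it is a crossing edge of order $k-1$ for the bipartition $A\,|\,\overline{A}\setminus\{w\}$. Hence the bound $1-2^{2-k}$ applies to at least one branch, which suffices. Your Case~1 survives the correction unchanged, since $e^*\cup\{w\}$ would be a crossing hyperedge of order $k+1>k$, which is forbidden, so $e^*$ cannot be cancelled from $|\beta\>$. The paper sidesteps this by first deleting all non-crossing edges and isolated vertices WLOG, picking $w\in\overline{A}$ arbitrarily, and then splitting into cases according to whether $A$ remains adjacent to $\overline{A}\setminus\{w\}$ in each of the two branch hypergraphs; this amounts to the same dichotomy you would need after the fix.
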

Here two disjoint nonempty subsets $A$ and $B$ of the vertex set $V$ of $G$ are adjacent if $E$ contains a hyperedge that connects a vertex in $A$ and a vertex in $B$.

\begin{proof}
	\Lref{eq:ReduceStateNorm} can be proved by induction. To start with,
	let  $n=|V|$; then $n\geq k\geq 2$  by assumption.	It is straightforward to verify that the lemma holds when $n=k=2$. Suppose the lemma holds for $2\leq k\leq n\leq n_0$ with $n_0\geq 2$. We shall prove that the lemma also holds for  $2\leq k\leq n= n_0+1$.

	It is instructive to note that $\|\varrho_A\|$ does not change if we add or delete hyperedges among vertices in $A$ or  hyperedges among vertices in $\overline{A}$. So we may assume that $G$ has neither hyperedges among  vertices in $A$  nor hyperedges among vertices in $\overline{A}$; in other words, every hyperedge of $G$ contains at least one vertex in $A$ and one vertex in $\overline{A}$.
	Then $k$ is equal to the order of $G$. In addition,  we may assume that $G$ has no isolated vertices. Note that the order of $G$ does not change if any isolated vertex, say $j$, is deleted; meanwhile, $\varrho_A$ does not change after this deletion  if $j\in \overline{A}$, while $\|\varrho_A\|=\|\varrho_{A\setminus \{j\}}\|$ if $j\in A$. Furthermore, we may  assume $|A|\leq n-2$ without loss of generality given that $\|\varrho_A\|=\|\varrho_{\overline{A}}\|$ and $n\geq 3$. By relabeling the parties if necessary, we may assume that $n\notin A$,  that is, $n\in \overline{A}$.
	
	According to Proposition 7.16 of \rcite{LyonUWY15},
	\begin{equation}
	\varrho_{V\setminus \{n\}}=\frac{1}{2}(|G_0\>\<G_0|+ |G_1\>\<G_1|),
	\end{equation}
	where $G_0,G_1$ are subhypergraphs of $G$ defined as follows
	\begin{equation}\label{eq:subhypergraph01}
	\begin{aligned}
	G_0&=(V\setminus\{n\},\{e\in E \,|\, n \notin e \}),\\
	G_1&=(V\setminus\{n\},\{e\in E\,|\, n \notin e \}\Delta \{e\setminus \{n\}\,|\, n\in e\in E \}).
	\end{aligned}
	\end{equation}
	Here $A \Delta B$ denotes the symmetric difference of $A$ and $B$, that is, $(A\cup B) \setminus( A\cap B)$. Literally, $G_0$ is the subhypergraph of $G$ obtained by deleting the vertex $n$ and all the hyperedges containing $n$; $G_1$ is the subhypergraph of $G$ obtained by deleting the vertex $n$, shrinking all the hyperedges containing $n$, and then deleting repeated hyperedges.
	
	Let $B=V\setminus \{n\}\setminus A$; note that $B$ is nonempty due to the assumption $|A|\leq n-2$.   In addition,
	$A\cup B=V\setminus\{n\}$ is the vertex set of both $G_0$ and $G_1$. Let
	$\varrho_0=\tr_B(|G_0\>\<G_0|)$ and $\varrho_1=\tr_B(|G_1\>\<G_1|)$. Then $\varrho_A=(\varrho_0+\varrho_1)/2$ and
	\begin{equation}
	\|\varrho_A\|\leq \frac{1}{2}(\|\varrho_0\|+\|\varrho_1\|).
	\end{equation}
	If $A$ is adjacent to $B$ with respect to both $G_0$ and $G_1$, then  the induction hypothesis implies that
	\begin{equation}
	\|\varrho_0\|\leq 1-2^{1-k_0}\leq 1-2^{1-k},\;\;  \|\varrho_1\|\leq 1-2^{1-k_1}\leq  1-2^{1-k},
	\end{equation}
	which in turn implies that $\|\varrho_A\|\leq  1-2^{1-k}$. Here   $k_0$ and $k_1$ denote the orders of $G_0$ and $G_1$, respectively, which satisfy  $k_0,k_1\leq k$.
	
	If $A$ is not adjacent to $B$ with respect to $G_0$, then $G_0$ has no  hyperedges, which implies that all hyperedges of $G$ contain the vertex $n$. Recall that  by assumption $G$ has neither hyperedges among  vertices in $A$  nor hyperedges among vertices in $\overline{A}$. Consequently, $G_1$ has order at most $k-1$. If, in addition,  $A$ is adjacent to $B$ with respect to  $G_1$, then $\|\varrho_1\|\leq 1-2^{2-k}$, which implies that
	\begin{equation}
	\|\varrho_A\|\leq \frac{1}{2}(\|\varrho_0\|+\|\varrho_1\|)\leq \frac{1}{2}(1+1-2^{2-k})=1-2^{1-k}.
	\end{equation}
	Otherwise, if $A$ is not adjacent to $B$ with respect to  $G_1$, then no hyperedge of $G$ contains any vertex in $B$; in other words, all vertices of $B$ are isolated with respect to $G$, which contradicts  our assumption.
	
	It remains to consider the case in which $A$ is adjacent to $B$ with respect to $G_0$, but not adjacent to $B$ with  respect to  $G_1$. In view of \eref{eq:subhypergraph01}, we conclude that $G_0$ has order at most $k-1$ since, otherwise, any order-$k$ hyperedge of $G_0$ (which necessarily connects $A$ and $B$) would also be a hyperedge of $G_1$. Therefore,
	\begin{equation}
	\|\varrho_A\|\leq \frac{1}{2}(\|\varrho_0\|+\|\varrho_1\|)\leq \frac{1}{2}(1-2^{2-k}+1)=1-2^{1-k}.
	\end{equation}	
	This observation completes the proof of \lref{eq:ReduceStateNorm}.
\end{proof}

\section{\label{sec:Comparison}Comparison with previous works}
In this section we discuss the connections and distinctions between our work and entanglement detection. We then compare our protocols for verifying hypergraph states with a number of previous works, including direct fidelity estimation (DFE) \cite{FlamL11} and
\rscite{MoriTH17,PallLM18,HayaH18,TakeM18,TakeMMM19}.

\subsection{Quantum state verification and entanglement detection}
In the main text, we introduce a simple and efficient protocol for verifying general hypergraph states. Our protocol can also be applied to detecting GME, though it is not necessarily optimized for this purpose.
In the literature, there are many works on the  detection of entanglement, including GME in particular \cite{GuhnT09}. The main distinction between state verification and entanglement detection lies in the motivations, which are reflected in  the following two questions.
\begin{enumerate}
	\item Is the quantum state prepared good enough for a given task, such as quantum computation, quantum communication, or quantum metrology?
	
	\item Is the quantum state prepared GME?
\end{enumerate}
The main motivation of the current work is to provide an efficient  tool for answering the first question, while most works on entanglement detection focus on the second question directly. Question 2 is definitely interesting in itself since GME is a key resource in quantum information processing and a focus of foundational studies. In addition, demonstrating GME in experiments is usually highly nontrivial and may serve as a signature of the advance in quantum information science. On the other hand,
although there are intimate connections between the two questions, the answer to question 2 is, in general, far from enough for  answering question 1, which usually entails high fidelity with the target state. Instead of demonstrating certain quantum signature, eventually, we need to answer more specific and practical questions directly. Crucial to achieving this task is efficient quantum  state verification, which is the focus of this work.

In addition, most works on entanglement detection are based on the expectation values of certain witness operators and usually do not discuss the number of tests required to make a conclusion. With the cover protocol, by contrast, we can not only provide more precise information about the quantum state prepared, but also determine the explicit number of tests required.  In addition, our approach can be applied to the adversarial scenario, which is appealing to many applications that require high security conditions.

\subsection{\label{sec:DFE}Comparison with direct fidelity estimation \rcite{FlamL11}}

In this section we compare our cover protocol with DFE introduced by Flammia and Liu \cite{FlamL11}. Compared with the cover protocol, DFE  can be applied to any pure state and thus has wider applications. The number of measurements required by DFE is smaller than tomography by a factor of $D=2^n$, where $n$ is the number of qubits. Moreover, this number does not increase with the number of qubits in the case of  stabilizer states. From this perspective, DFE is very efficient and very useful.
However, DFE has several
drawbacks as mentioned below which limit its applications to hypergraph states and many other states of
quantum systems of more than 15 qubits.

\begin{enumerate}
	\item To apply DFE it is necessary to  sample from the squared characteristic function defined on the discrete phase space of $2^{2n}$ points. In general, it is not easy to compute and store this function for large quantum systems; also, it is not easy  to implement the sampling even if the characteristic function is determined.
	
	\item The number of potential measurement settings increases exponentially with the number of qubits even for stabilizer states. The number of actual
	measurement settings $\lceil 1/(\epsilon^2 \delta)\rceil$ depends on the target infidelity $\epsilon$ and significance level $\delta$.
	Specific measurement settings cannot be determined before implementing the protocol. Also, the total number of  measurements cannot be determined in advance.
	
	\item The average total number of  measurements reads
	\begin{align}
	N_{\dfe}&\approx 1+\frac{1}{\epsilon^2\delta}+\frac{2g}{D\epsilon^2}\ln (2/\delta)\nonumber\\
	&=1+\frac{1}{\epsilon^2\delta}+\frac{2\tilde{g}}{\epsilon^2}\ln (2/\delta),\label{eq:DFEnumMeas}
	\end{align}
	where $D=2^n$, $\tilde{g}=g/2^n$, and  $g$ is the number of points  at which the characteristic function is nonzero \cite{FlamL11}. It is known that $g\geq D$ and the lower bound is saturated iff the target state is a stabilizer state.  For a generic state $g\approx D^2$, so the number of measurements increases exponentially with $n$. As we shall see shortly, the exponential growth is also inevitable for many hypergraph states.

	The number $N_{\dfe}$ in \eref{eq:DFEnumMeas} can be reduced for a well-conditioned state $\rho$,  which means either  $|\tr(\rho W_{x,z})|=0$ or $|\tr(\rho W_{x,z})|\geq c $ for all Pauli operators $W_{x,z}$ [cf.~\eref{eq:PauliW} below], where $c$ is a positive constant whose inverse is upper bounded by a polynomial of $n$. In this case, $N_{\dfe}$ can be reduced to $O\bigl(\ln(1/\delta)/(c^2\epsilon^2)\bigr)$, though the quadratic scaling behavior with $1/\epsilon$ does not change. However, many hypergraph states are not well conditioned.
	In addition, no simple way is known for determining whether a generic hypergraph state is well conditioned or not when the number of qubits is large.
\end{enumerate}

To analyze the supports of the characteristic functions of hypergraph states, it is instructive to point out that  any hypergraph state is a real equally weighted state and vice versa \cite{QuWLB13,RossHBM13}. In other words, any $n$-qubit hypergraph state can be written as
\begin{equation}
|\Psi_f\>=2^{-n/2}\sum_{x=0}^{2^n-1}(-1)^{f(x)}|x\>,
\end{equation}
where $x$ is understood as a string in $\{0,1\}^n$ and
$f$ is a Boolean function from $\{0,1\}^n$ to $\{0,1\}$. For example, the Boolean function corresponding to the hypergraph state $|G\>=\bigl(\prod_{e\in E} CZ_e\bigr)|+\>^{\otimes n}$ is given by
\begin{equation}
f(x)=\sum_{e\in E}\prod_{j\in e} x_j,
\end{equation}
where the addition is modulo 2.
Up to  a phase factor, any $n$-qubit Pauli operator can be written as
\begin{equation}\label{eq:PauliW}
W_{x,z}: =\left(\prod_{j=1}^n X_j^{x_j}\right)\left(\prod_{j=1}^n Z_j^{z_j}\right),\quad x,z\in \{0,1\}^n,
\end{equation}
where $X_j$ and $Z_j$ are the Pauli $X$ and $Z$ operators for the $j$th qubit. Here we are mainly interested in the absolute value of the characteristic function, so the overall phase factor does not matter. Calculation shows that
\begin{equation}
\<\Psi_f | W_{x,z}|\Psi_f\>=\frac{1}{2^n}\sum_{u=0}^{2^n-1}(-1)^{f(u)+f(u+x)}(-1)^{z\cdot u},
\end{equation}
where the addition $u+x$ is modulo 2 and so is the product $z\cdot u:=\sum_{j=1}^n z_j u_j$. The cardinality of the support of the characteristic function reads
\begin{equation}
g(f)= \bigl|\bigl\{(x,z)\in \{0,1\}^{2n}\,|\, \<\Psi_f | W_{x,z}|\Psi_f\>\neq0\bigr\}\bigr|.
\end{equation}

In the rest of this section, we provide several concrete examples of hypergraph states for which $\tilde{g}=g/2^n$ increases exponentially with the number $n$ of qubits, which means $N_{\dfe}$ increases exponentially with $n$.
First, consider the special hypergraph with only one hyperedge, which contains all $n$ vertices. The corresponding Boolean function $f_n$ reads
\begin{equation}
f_n(u):=\prod_{j=1}^n u_j=\begin{cases}
1 &u=11\cdots 1,\\
0 &\mbox{otherwise}.
\end{cases}
\end{equation}
In this case, we have
\begin{widetext}
	\begin{equation}
	2^n\bigl|\<\Psi_{f_n} | W_{x,z}|\Psi_{f_n}\>\bigr|=\begin{cases}
	2^n &x=z=0,\\
	2^n-4 &z=0, x\neq 0, \\
	4 & x\neq 0,z\neq 0, x\cdot z=0,\\
	0 & x\cdot z=1, \mbox { or }x=0, z\neq 0,
	\end{cases}
	\end{equation}
\end{widetext}
which implies that
\begin{equation}
g(f_n)=2^{2n-1}-2^{n-1}+1,\quad \tilde{g}\approx 2^{n-1}-2^{-1}.
\end{equation}
So the number of measurements in \eref{eq:DFEnumMeas} reduces to
\begin{equation}
N_{\dfe}\approx 1+\frac{1}{\epsilon^2\delta}+\frac{2^n-1}{\epsilon^2}\ln (2/\delta),
\end{equation}
which increases exponentially with the number of qubits. By contrast, the number of tests required by our cover protocol is at most $\lceil(n/\epsilon)\ln (1/\delta)\rceil$
by \eref{eq:NumTestCoverUB} in the main text, which is exponentially smaller than $N_{\dfe}$.

As another example, consider the tensor power  $|\Psi_{f_3}\>^{\otimes n/3}$, which corresponds to the hypergraph state with $n/3$ disjoint hyperedges of order $3$, assuming that $n$ is divisible by 3. In this case,
\begin{equation}
g=g(f_3)^{n/3}=29^{n/3}>3^n,\quad \tilde{g}=\frac{29^{n/3}}{2^n}
> \Bigl(\frac{3}{2}\Bigr)^n.
\end{equation}
So the number of measurements in \eref{eq:DFEnumMeas} reduces to
\begin{align}
N_{\dfe}&\approx 1+\frac{1}{\epsilon^2\delta}+\frac{2\times 29^{n/3}}{2^n\epsilon^2}\ln (2/\delta)\nonumber\\
&> 1+\frac{1}{\epsilon^2\delta}+\frac{2\left(\frac{3}{2}\right)^{n}}{\epsilon^2}\ln (2/\delta),
\end{align}
which also increases exponentially with the number of qubits.
By contrast, the number of tests required by the cover protocol is  $\lceil(3/\epsilon)\ln (1/\delta)\rceil$,
which is again  exponentially smaller than $N_{\dfe}$.

Furthermore, numerical calculations  show that $\tilde{g}$ increases exponentially with $n$ for  order-3 cluster states and Union Jack states on a chain or  a two-dimensional  lattice, so $N_\dfe$ also increases exponentially with $n$ for these states (cf.~\fref{fig:NumberMeas} in the main text). The number of tests required by the cover protocol is still $\lceil(3/\epsilon)\ln (1/\delta)\rceil$.

\subsection{\label{sec:MTH}Comparison with \rcite{MoriTH17}}
Recently, Morimae, Takeuchi, and Hayashi (MTH) \cite{MoriTH17} introduced a method for verifying hypergraph states in the adversarial scenario. They only considered the case in which all hyperedges have orders at most 3. Although their method may potentially be extended to more general settings,
a direct extension of their approach entails exponential increase in the resource overhead  with the order of the hypergraph. Even for order-3 hypergraph states, the resource overhead  increases exponentially  with the number of hyperedges (and thus the degree of the hypergraph). Another drawback of the MTH protocol is that even the target hypergraph state $|G\>$ cannot pass the test with certainty. Consequently, the number of tests required increases quadratically with the inverse infidelity.

More specifically, suppose $|G\>$ is an $n$-qubit hypergraph state to be verified. Let $K_j$ be the stabilizer operator corresponding to vertex $j$ as defined in \eref{eq:StabilizerHGS} in the main text; let $r_j$ be the number of order-3 hyperedges that contain the vertex $j$. The MTH verification protocol is composed of $n$ stabilizer tests. For each stabilizer $K_j$, MTH devised a test, which requires
$4^{r_j}$ potential measurement settings.
The total number  of potential measurement settings  is  $\sum_{j=1}^n 4^{r_j}$,
which increases exponentially with the number of order-3 hyperedges. MTH also set  a criterion such that the probability of a state $\rho$ to satisfy the criterion is given by
\begin{equation}
p_j=\frac{1}{2} +\frac{\tr(\rho K_j)}{2^{r_j+1}}=\frac{1}{2} +\frac{1-a_j}{2^{r_j+1}},
\end{equation}
where $a_j:=1-\tr(\rho K_j)$ satisfies $0\leq a_j\leq 2$.
Although the target state $|G\>$ can attain the maximum probability $(1/2)+(1/2^{r_j+1})$,  it generally cannot satisfy the criterion with certainty. Suppose the test is performed $N_j$ times,
and the criterion is satisfied $t_j$ times. Then the stabilizer test is passed if the frequency $f_j=t_j/N_j$ satisfies
\begin{equation}
f_j\geq \frac{1}{2}+\frac{1-\theta}{2^{r_j+1}},
\end{equation}
where $\theta$ is a small positive constant.
The state $\rho$ is accepted if it can pass all the stabilizer tests.
The choice of $\theta$ needs to guarantee that the target state $|G\>$ can pass all  the tests with high probability; meanwhile, any state that has low fidelity with $|G\>$ should fail some test with high probability.
When  $a_j\geq \theta$,
the probability that $\rho$ can pass the stabilizer test associated with $K_j$ can be upper bounded as follows,
\begin{align}
&\Pr\left(f_j\geq \frac{1}{2}+\frac{1-\theta}{2^{r_j+1}} \right)=\Pr\left(f_j\geq p_j+\frac{a_j-\theta}{2^{r_j+1}} \right)\nonumber\\
&\leq \exp\left(-2\frac{(a_j-\theta)^2}{4^{r_j+1}}N_j\right), \label{eq:Hoeffding}
\end{align}
where the last step follows from the Hoeffding  inequality. Similarly, the probability that the target state $|G\>$ passes the test can be lower bounded by virtue of  the Hoeffding  inequality.

MTH did not give an explicit number of tests needed to verify a hypergraph state within infidelity $\epsilon$ and significance level $\delta$. They considered a related, but different verification problem with a different criterion, which requires about $nk+m$ tests, where $k=2^{2r+3}n^7$, $m\geq (2\ln 2)n^7 k^2$, and $r=\max_j r_j$.
In other words, the number of required tests satisfies
\begin{align}
nk+m &\ge nk+ (2\ln 2)n^7 k^2 \cong(2\ln 2)n^7 k^2 \nonumber\\
&= (2^{4r+7}\ln 2) n^{21}. \label{eq:MTHbound1}
\end{align}
While this number is still polynomial in $n$ if $r$ does not increase with $n$, it grows rapidly with $n$. Actually, it is already astronomical when  $n=5$ and $r=2$ (note that $r=8$ for generic Union Jack states on 2D lattices), while any useful MBQC would require more than five qubits.
So the MTH protocol  is hardly practical. In contrast, the number of tests required  by our cover or coloring protocol satisfies
\begin{align}\label{eq:NumComMTH}
N\leq \biggl\lceil\frac{\Delta(G)+1}{\delta\epsilon} \biggr\rceil\leq \biggl\lceil\frac{2r+1}{\delta\epsilon} \biggr\rceil
\end{align}
according to \eref{eq:NumTestAdvHG2}. Note that $\Delta(G)\leq 2r$ since $G$ is an order-3 hypergraph state. The upper bound in \eref{eq:NumComMTH} 
is independent of $n$,  which shows that our protocol is dramatically more efficient than the MTH protocol. According to \esref{eq:NumTestAdvHGhcolor} and \eqref{eq:NumTestAdvHGhcolorOpt}, the hedged cover or coloring protocol can further reduce the number of tests to
\begin{align} N\leq \!\!\frac{[\Delta(G)+\rme]\ln(F\delta)^{-1}}{\epsilon}
\leq \frac{(2r+\rme)\ln(F\delta)^{-1}}{\epsilon}.
\end{align}

It is natural to ask whether the number of tests
can be reduced significantly if the MTH protocol is adapted to the nonadversarial
scenario considered in the main text. Here we try to give a rough estimate.

To verify $|G\>$ within infidelity $\epsilon$ and significance level $\delta$, suppose $1-\<G|\rho|G\>\geq \epsilon$, we need to estimate the maximal probability that $\rho$ can pass all the stabilizer tests and make sure that this probability is smaller than $\delta$, that is,
\begin{equation}
\prod_j \Pr\left(f_j\geq \frac{1}{2}+\frac{1-\theta}{2^{r_j+1}} \right)=\Pr\left(f_j\geq p_j+\frac{a_j-\theta}{2^{r_j+1}} \right)\leq \delta.
\end{equation}
According to \eref{eq:Hoeffding}, it suffices to guarantee that
\begin{equation}\label{eq:ProbBound}
\prod_{j|a_j\geq \theta }          \exp\left(-2\frac{(a_j-\theta)^2}{4^{r_j+1}}N_j\right)\leq \delta.
\end{equation}
Note that  the infidelity of $\rho$ with $|G\>$ satisfies
\begin{align}
&1-\<G|\rho|G\> =1-\tr\Biggl(\rho\prod_j \frac{K_j+1}{2}\Biggr)\nonumber\\
&\leq \sum_j \biggl[1-\tr\biggl(\rho \frac{K_j+1}{2}\biggr)\biggr]=\frac{1}{2} \sum_j a_j.\label{eq:UnionBound}
\end{align}
The upper bound can be saturated when $0\leq \sum_j a_j\leq 2$. To see this, for each $k=1,2,\ldots, n$, let $|G_k\>$ be the common eigenstate of $K_j$ for $j=1,2,\ldots, n$ with eigenvalues
\begin{equation}
K_k|G_k\>=-1, \quad K_j|G_k\> =1, \quad \forall j\neq k. 
\end{equation}
Let 
\begin{equation}
\rho=\biggl(1-\sum_j \lambda_j \biggr) |G\>\<G|+\sum_j \lambda_j |G_j\>\<G_j|,
\end{equation}
where $\lambda_j=a_j/2$. Then we have $1-\tr(\rho K_j)=2\lambda_j=a_j$ and
$1-\<G|\rho|G\>=\sum_j\lambda_j=\sum_j a_j/2$, so the bound in \eref{eq:UnionBound}  is saturated.

According to \eref{eq:UnionBound},  $\sum_j a_j\geq 2\epsilon$ if
the infidelity of $\rho$ is at least $\epsilon$, that is, $1-\<G|\rho|G\geq \epsilon$.
Now we can  derive a lower bound for $\sum_j N_j$ under the requirement that
\eref{eq:ProbBound} holds whenever
$\sum_j a_j\geq 2\epsilon$. To this end, choose
\begin{equation}
a_j=\frac{2\epsilon\times 2^{r_j}}{\sum_k 2^{r_k}},
\end{equation}
then we have $\sum_j a_j=2\epsilon$. So 
\eref{eq:ProbBound} implies that
\begin{equation}
\exp\left(-\frac{2\epsilon^2\sum_j N_j}{\bigl(\sum_j 2^{r_j}\bigr)^2}\right)\leq \delta,
\end{equation}
which in turn implies that
\begin{equation}\label{eq:MTHbound2}
N_{\mth}=\sum_j N_j\geq \frac{\bigl(\sum_j 2^{r_j}\bigr)^2\ln\delta^{-1}}{2\epsilon^2}.
\end{equation}
If all $r_j$ are equal to $r$, then the MTH protocol requires $4^r n$ potential measurement settings  and at least
\begin{equation}\label{eq:MTHbound3}
N_{\mth}\geq \frac{4^r n^2\ln\delta^{-1}}{2\epsilon^2}
\end{equation}
tests. The bounds in the above two equations have much  better scaling behavior with $n$ compared with the bound in \eref{eq:MTHbound1}. However, these bounds are already very large for a small value of $n$ for Union Jack states and many other hypergraph states for which $r$ is not so small.
In general, it is too prohibitive to implement the MTH protocol
except for hypergraph states of no more than ten qubits.

A few comments are in order. First, we do not know how tight the bounds in \esref{eq:MTHbound2} and \eqref{eq:MTHbound3} are. Nevertheless, these bounds are sufficient for comparing the MTH protocol with our protocol, and it is  not so important to derive a tighter bound using more involved analysis. Second,  \eref{eq:MTHbound2} is based on
\esref{eq:Hoeffding} and \eqref{eq:UnionBound}. Note that the bound in \eqref{eq:UnionBound} is tight. The Hoeffding inequality in  \eref{eq:Hoeffding} may potentially be improved, thereby reducing $N_\mth$. However, this possibility was not considered by MTH. We are not aware of any simple method for improving the Hoeffding inequality either and do not expect a significant improvement even with more sophisticated analysis.  In this regard, our protocol is not only much more efficient, but also much easier to implement and to analyze its performance.

In the rest of this section, we consider the performance of the MTH protocol adapted to the nonadversarial scenario for several concrete  order-3 hypergraph states.
As a start, consider the complete order-3 hypergraph state whose underlying hypergraph contains all possible order-3 hyperedges. In this case, the total number of hyperedges is $\binom{n}{3}=n(n-1)(n-2)/6$ and $r_j=r=\binom{n-1}{2}=(n-1)(n-2)/2$ for $j=1,2,\ldots, n$. Therefore,
\begin{equation}\label{eq:MTHboundComplete}
N_{\mth}\geq \frac{2^{(n-1)(n-2)} n^2\ln\delta^{-1}}{2\epsilon^2}.
\end{equation}
Here both the number of potential measurement settings and the number of tests required by the MTH protocol increase exponentially with the number of qubits. By contrast, our cover protocol requires at most $n$ potential measurement settings and $\lceil(n/\epsilon)\ln(1/\delta)\rceil$ tests according to \eref{eq:NumTestCoverUB}.

The  examples considered in the rest of this section are 3-colorable, so our cover protocol requires three measurement settings and $\lceil(3/\epsilon)\ln (1/\delta)\rceil$ tests to verify each hypergraph state within infidelity $\epsilon$ and significance level~$\delta$. First, consider the tensor power  $|\Psi_{f_3}\>^{\otimes n/3}$ introduced in \aref{sec:DFE}, assuming that $n$ is divisible by 3. In this case $r_j=r=1$ for all $j=1,2,\ldots,n$. So \eref{eq:MTHbound3} reduces to
\begin{equation}\label{eq:MTHboundf3Power}
N_{\mth}\geq \frac{2 n^2\ln\delta^{-1}}{\epsilon^2}.
\end{equation}

Next, consider  order-3  cluster states. In the 1D case, the vertices of the underlying hypergraph are arranged in a chain and labeled by natural numbers; all hyperedges have the form $\{j,j+1,j+2\}$ with $j\geq1$ and $j\leq n-2$, assuming $n\geq 3$. If we use $0,1,2$ to denote three colors, then the hypergraph can be colored by assigning vertex $j$ with the color $(j\mod 3)$. Similar analysis applies to 2D and higher-dimensional lattices. For simplicity, here we focus on the 1D case, which yields 
\begin{equation}
r_j=\begin{cases}
1 & n=3\mbox{ or } j=1 \mbox{ or } j=n,\\
2 &n\geq 4, j=2 \mbox{ or } j=n-1,\\
3 &  j\neq 1,2,n-1,n.
\end{cases}
\end{equation}
Therefore,
\begin{equation}
\sum_j 2^{r_j}=\begin{cases}
6 & n=3,\\
8n-20 &n\geq 4,
\end{cases}
\end{equation}
which implies that
\begin{equation}
N_{\mth}\geq \begin{cases}
\frac{18\ln\delta^{-1}}{\epsilon^2} & n=3,\\
\frac{8(2n-5)^2\ln\delta^{-1}}{\epsilon^2} &n\geq 4.
\end{cases}
\end{equation}

Next, consider the Union Jack state on the Union Jack chain; cf.~\fref{fig:hypergraph} in the main text. In this case, we have $r_j=2$ when $j$ corresponds to one of the four corners and $r_j=4$ otherwise.  Therefore,
\begin{equation}\label{eq:MTHboundUJchain}
\sum_j 2^{r_j}=16n-48,\quad
N_{\mth}\geq
\frac{128(n-3)^2\ln\delta^{-1}}{\epsilon^2}.
\end{equation}
Finally, consider the Union Jack state on the Union Jack lattice with $\tilde{n}\times \tilde{n}$ cells and   $n=\tilde{n}^2+(\tilde{n}+1)^2$ qubits. Calculation shows that
\begin{align}
\sum_j 2^{r_j}&=2^8(\tilde{n}-1)^2+2^4[\tilde{n}^2+4(\tilde{n}-1)]+2^2\times 4\nonumber\\
&=16(17\tilde{n}^2-28\tilde{n}+13 ),
\end{align}
which implies  that
\begin{align}
N_{\mth}&\geq
\frac{128(17\tilde{n}^2-28\tilde{n}+13 )   ^2\ln\delta^{-1}}{\epsilon^2}.\label{eq:MTHboundUJlattice}
\end{align}

\subsection{\label{sec:HHprotocol}Comparison with \rcite{HayaH18}}

Here, in the adversarial setting,
we compare our method with the method proposed by Hayashi and Hajdu\v{s}ek (HH)  \cite{HayaH18}, who considered the verification of graph states, but not hypergraph states. In addition, HH mainly focused on the case in which the graph is 3-colorable. They mentioned the general case briefly, but did not analyze the performance of their protocol in detail.
Since the main focus of \rcite{HayaH18} is self-testing,
HH do not trust their measurement devices.
However, after the verification of their measurement devices,
they verify their graph state under the assumption that
their measurement devices are trusty.

Suppose $|G\>$ is a
graph state associated with the graph $G$. When
$G$ is $m$-colorable, HH (Appendix~F of \rcite{HayaH18}) proposed the following  verification protocol, which consists of $m$
stabilizer tests.
Given a coloring  $\scrA=\{A_1, A_2, \ldots, A_m \}$  of $G$ with  $m$ colors, the verifier asks the adversary to prepare  $N+1$ systems with $N=m N'$.
After a random permutation of the $N+1$ systems,   $N$ systems are chosen and divided into $m$ groups  each with $N'$ systems. Then all systems in the $l$th group for $l=1,2,\ldots, m$ are subjected to the stabilizer test with $P_l$ [cf.~\eref{eq:PassProj} in the main text] as the projector onto the pass eigenspace. Let $\sigma$ be the reduced state of the remaining system after all these tests are passed.
If the $l$th test $P_l$ is passed  with significance level $\delta' $, then
one can guarantee that
$\tr[ \sigma (\id-P_l)]\le \frac{1}{\delta'(N'+1)}$.
If all the tests $P_1, \ldots, P_m$ are passed, with significance level $\delta:=m\delta' $,
then one can guarantee that
\begin{align}
\epsilon&=\tr[ \sigma (\id-|G\>\< G|)]
\le \sum_{l=1}^m \tr[ \sigma (\id-P_l)]
\nonumber\\
&\le \sum_{l=1}^m \frac{1}{\delta'(N'+1)}
=\frac{m^2}{\delta(N/m+1)}
\cong \frac{m^3}{\delta N }.
\end{align}
To verify $|G\>$ within infidelity $\epsilon$ and significance level $\delta$ in the adversarial scenario, the HH protocol requires about
$\lceil m^3/(\delta\epsilon)\rceil$ tests.

Now, we explain how our protocols outperform the HH protocol.
If we employ the coloring protocol and randomly choose the $l$th measurement setting with probability $1/m$,
then the verification operator has spectral gap $\nu(\Omega)=1/m$ according to
\thref{thm:CoverEfficiency}. If $N$ tests are passed with significance level $\delta$, then  we can guarantee that
\begin{align}
\epsilon=\tr [\sigma (\id-|G\>\< G|)]
\le \frac{m(1-\delta)}{N \delta }
\end{align}
according to \rscite{ZhuH19AdS,ZhuH19AdL}. 
So  the coloring protocol requires only
$\lceil m(1-\delta)/(\delta\epsilon)\rceil $ tests to verify $|G\>$ within infidelity $\epsilon$ and significance level $\delta$ in the adversarial scenario; cf. \eref{eq:NumTestAdvHG} in the main text. This is much more efficient than the HH protocol. Thanks to \eref{eq:NumTestAdvHGhcolor}, the  hedged  coloring protocol can further reduce the number of tests and achieve the optimal scaling behavior with $\delta$.  When $m=3$ and $\epsilon=\delta=0.01$  for example, the HH protocol in \rcite{HayaH18} requires 270000 tests, while the hedged coloring protocol requires only 1870 tests, which is smaller by 144 times.

\subsection{\label{sec:TakeM18}Comparison with \rcite{TakeM18}}
Recently,  Takeuchi and  Morimae (TM) \cite{TakeM18} introduced a protocol for verifying general hypergraph states whose orders are upper bounded by a constant. Recall that the order of a hypergraph $G=(V,E)$ is the maximum cardinality of hyperedges in the hyperedge set $E$.

Let $G=(V,E)$ be a hypergraph such that $2\leq |e|\leq c$ for all $e\in E$, where $c$ is a positive constant. Let $k\geq (4n)^7$ and $m\geq (2\ln2)n^3k^{18/7}$ be positive integers.
According to Theorem~5 in \rcite{TakeM18},
to verify the  hypergraph state $|G\>$ within infidelity $\epsilon=k^{-1/7}$ and significance level $\delta=k^{-1/7}$, the number of tests required by the TM protocol is given by
\begin{align}
N_{\mathrm{TM}}&=m+nk\geq (2\ln2) n^3k^{18/7}+nk> (2 \ln2) n^3k^{18/7} \nonumber \\
&=(2\ln 2) n^3\epsilon^{-18}. \label{eq:NTM1}
\end{align}
Note that the conditions $k\geq (4n)^7$ and $\epsilon=k^{-1/7}$ imply the inequality $4n\epsilon\leq1 $. For example, when $k= (4n)^7$ and $\epsilon=\delta=k^{-1/7}=1/(4n)$, the number of required tests satisfies
\begin{align}
N_{\mathrm{TM}}&\geq  (2\ln2)n^3(4n)^{18}
+n(4n)^7=(2^{37}\ln2)n^{21}+2^{14}n^8\nonumber\\
&>(2^{37}\ln2)n^{21}>9.5\times 10^{10}n^{21}.
\end{align}
Although
this number is still polynomial in $n$, it is already astronomical in the simplest nontrivial scenario with $n=3$.  So it is too prohibitive to apply the TM protocol in any scenario of practical interest.
By contrast,
the number of tests required by our coloring protocol satisfies
\begin{align}
N&\leq (16n^2-4n)\chi(G)< 16n^2\chi(G)\leq 16n^3
\end{align}
according to \eref{eq:NumTestAdvHG2} in the main text,
which is dramatically smaller than $N_{\mathrm{TM}}$. The hedged coloring protocol can further reduce the number of tests according to  \esref{eq:NumTestAdvHGhcolor} and \eqref{eq:NumTestAdvHGhcolorOpt}.

Our protocols are not only much more efficient than the TM protocol, but also much simpler to apply. In particular, the TM protocol relies on adaptive stabilizer tests, while
our protocols do not rely on any adaption. In addition, the data processing in the TM protocol is a bit involved, while it is very simple in our protocols. Furthermore, TM did not derive the explicit number of required tests except  for restricted choices of the infidelity $\epsilon$ and significance level $\delta$, which makes it difficult to apply their result in many scenarios of practical interest. By contrast, we  derive the explicit number of required tests  for all valid choices of $\epsilon$ and  $\delta$.

\bigskip

\subsection{Comparison with \rcite{PallLM18}}

Recall that graph states are special hypergraph states in which all edges have order 2. It is known that all stabilizer  states are equivalent  to graph states under local Clifford transformations \cite{Schl02,GrasKR02}.
Due to their simple structures, stabilizer states can be verified efficiently using Pauli measurements \cite{FlamL11,PallLM18}. The protocol proposed by
Pallister, Linden, and  Montanaro (PLM) \cite{PallLM18} is particularly efficient with respect to  the total number of tests.
To be specific, the PLM protocol measures all $2^n-1$ nontrivial stabilizer operators of $|G\>$ in the Pauli group with equal probability.
The resulting verification operator  has the form
\begin{align}\label{eq:OmegaPLM}
\Omega_{\rm PLM} = |G\>\<G|+ \frac{2^{n-1}-1}{2^n-1}(\id-|G\>\<G|),
\end{align}
with
\begin{equation}\label{eq:vPLM}
\beta(\Omega_{\rm PLM})=  \frac{2^{n-1}-1}{2^n-1},\quad   \nu(\Omega_{\rm PLM})= \frac{2^{n-1}}{2^n-1}.
\end{equation}
To verify $|G\>$ within infidelity $\epsilon$ and significance level $\delta$, this protocol requires about
\begin{equation}
\lceil2^{1-n}(2^n-1)\epsilon^{-1}\ln\delta^{-1}\rceil\leq\lceil 2\epsilon^{-1}\ln\delta^{-1}\rceil
\end{equation}
tests, which is  smaller than the number $\lceil\chi(G)\epsilon^{-1}\ln\delta^{-1}\rceil$ required by our coloring protocol [cf.~\eref{eq:NumTestCoverUB}]. However,   the number of potential measurement settings required by the PLM protocol
increases exponentially with the number $n$ of qubits. When $n$ is large, this protocol will be impractical  if it is costly or time consuming to switch measurement settings. By contrast, our coloring protocol requires at most $n$ potential measurement settings. In addition,
when the chromatic number $\chi(G)$ of $G$ is small (in particular when $G$ is 2-colorable), the total number of tests required   is comparable to the PLM protocol.
Furthermore, the PLM protocol requires  $Y=\rmi XZ$ measurement because
it is necessary to measure all nontrivial stabilizer operators of $|G\>$, while
our protocol requires only  $X$ and $Z$ measurements.

Incidentally, \rcite{PallLM18} introduced another protocol for verifying the graph state $|G\>$ by measuring  $n$ stabilizer generators of $|G\>$ with equal probability.  The resulting verification operator $\Omega$ has spectral gap $\nu(\Omega)=1/n$. This protocol requires $\lceil n\epsilon^{-1}\ln\delta^{-1}\rceil $ tests in total, which  corresponds to the performance of our coloring protocol in the  worst case in which the graph is complete (contains all possible edges).
In general,  the coloring protocol requires much fewer measurement settings and tests in total.

\subsection{Comparison with \rcite{TakeMMM19}}
Very recently, Takeuchi, Mantri, Morimae, Mizutani, and Fitzsimons (TMMMF)  introduced a protocol for  verifying graph states with a very small  significance level \cite{TakeMMM19} (posted on arXiv after our paper). To be specific, given a graph state $|G\>$ of $n$ qubits, by  performing
$N_{\rm TMMMF}=2n \lceil (5 n^4 \ln n)/32\rceil$ tests,  the protocol proposed in \rcite{TakeMMM19} guarantees that
the resultant state $\sigma$ satisfies the following condition
\begin{align}
\langle G|\sigma |G\rangle
\ge 1- \frac{2 \sqrt{c}+1}{n}
\end{align}
if these tests are passed
with significance level $n^{1-5c/64}$.
Here, $c$ is a constant that satisfies $\frac{64}{5}< c< \frac{(n-1)^2}{4}$.

Next, we consider the performance of the hedged coloring  protocol proposed in the main text.  Suppose the graph $G$ is $m$ colorable and we apply a hedged coloring protocol with $m$ colors. By  \eref{eq:NumTestAdvHGhcolor}, to verify $|G\>$ within infidelity $\epsilon=\frac{2 \sqrt{c}+1}{n}$ and significance level $\delta=n^{1-5c/64}$,   the number of  required tests reads
\begin{align}
N&=\biggl\lfloor \frac{h_*(1/m) \ln(F\delta)^{-1}}{\epsilon}\biggr\rfloor\leq \frac{(m+\rme-1)\ln(F\delta)^{-1}}{\epsilon},\nonumber\\
&\approx \frac{(m+\rme-1)(\frac{5c}{64}-1)n \ln n}{2 \sqrt{c}+1}\leq O(n^2 \ln n),
\end{align}
where $F=1-\epsilon$ and the approximation holds as long as $\epsilon,\delta\ll1$. For most graph states of practical interest, the chromatic numbers are upper bounded by a small constant, so
$N= O(n\ln n)$ tests are sufficient.  This number is much smaller than
$N_{\rm TMMMF}$. Therefore, our hedged  coloring protocol is much  more efficient that the protocol presented in  \rcite{TakeMMM19}.



\nocite{apsrev41Control}
\bibliographystyle{apsrev4-1}
\bibliography{all_references}

\end{document}